\documentclass{article}

\usepackage{arxiv}
\usepackage[T1]{fontenc}
\usepackage[utf8]{inputenc}
\usepackage{textcomp}
\usepackage{amsmath}
\usepackage{amssymb}
\usepackage{graphicx}
\usepackage{booktabs}
\usepackage{xcolor}
\usepackage{algorithm}
\usepackage{algpseudocode}
\usepackage{float}
\usepackage{placeins}
\usepackage{tikz}
\usetikzlibrary{arrows.meta,positioning}
\usepackage{caption}
\usepackage{csquotes}
\usepackage[backend=biber,style=numeric,sorting=none]{biblatex}
\usepackage[hidelinks]{hyperref}
\usepackage{orcidlink}

\graphicspath{{figures/}}

\newtheorem{theorem}{Theorem}[section]
\newcommand{\aiuse}[1]{\par\medskip\noindent\textbf{Declaration of use of AI.}\enskip #1}
\newcommand{\ethics}[1]{\par\medskip\noindent\textbf{Ethics.}\enskip #1}
\newcommand{\dataccess}[1]{\par\medskip\noindent\textbf{Data accessibility.}\enskip #1}
\newcommand{\aucontribute}[1]{\par\medskip\noindent\textbf{Authors' contributions.}\enskip #1}
\newcommand{\competing}[1]{\par\medskip\noindent\textbf{Competing interests.}\enskip #1}
\newcommand{\funding}[1]{\par\medskip\noindent\textbf{Funding.}\enskip #1}
\newcommand{\ack}[1]{\par\medskip\noindent\textbf{Acknowledgements.}\enskip #1}

\title{Quantifying the complexity of trajectory ensembles with clustering-weighted multivariate multiscale sample entropy}

\author{%
  Chenxiao Tian~\orcidlink{0000-0001-6468-8201} \\
  Princeton University \\
  Princeton, NJ 08544
  \And
  J\"urgen Hackl~\orcidlink{0000-0002-8849-5751}\thanks{Corresponding author.} \\
  Princeton University \\
  Princeton, NJ 08544 \\
  \texttt{hackl@princeton.edu}
}
\date{\today}

\renewcommand{\shorttitle}{Complexity of trajectory ensembles (CWMMSE)}
\renewcommand{\headeright}{}
\renewcommand{\undertitle}{}

\hypersetup{
  pdftitle={Quantifying the complexity of trajectory ensembles with clustering-weighted multivariate multiscale sample entropy},
  pdfauthor={Chenxiao Tian, Juergen Hackl},
  pdfkeywords={sample entropy, multiscale entropy, trajectory clustering, weighted entropy, ensemble complexity, functional diversity},
}

\begin{document}
\maketitle

\begin{abstract}
Across the physical and life sciences, data increasingly appear as ensembles of trajectories, from chaotic flows and satellite constellations to clinical cohorts. Established sample-entropy measures characterize individual time series, while averaging across an ensemble discards population structure and cannot distinguish redundancy from diversity. We introduce clustering-weighted multivariate multiscale sample entropy (CWMMSE), which groups trajectories into behavioral patterns and weights each by its dynamical complexity. CWMMSE is a weighted entropy of the population's pattern distribution. Its empirical plug-in estimator is strongly consistent for a fixed finite partition, and it separates two components that can diverge in real data: individual complexity and population diversity. Both are essential. Averaging ignores diversity, whereas spread alone can mistake a varied but predictable population for a complex one. Across eleven physical, environmental, engineering, and biomedical systems, CWMMSE ranks a calm ocean region above an energetic but individually more complex one, identifies a major earthquake as a collapse in system complexity, and reverses the conclusion from averaging in cardiac cohorts, where disease reduces population diversity. Supported by an open, reproducible implementation, these results show that population complexity should be measured rather than averaged.
\end{abstract}

\keywords{sample entropy \and multiscale entropy \and trajectory clustering \and weighted entropy \and ensemble complexity \and functional diversity}

\begin{refsection}

\section{Introduction}\label{sec:intro}

Across the physical, biological, medical and engineering sciences, a system is increasingly observed not
through a single record but through a large ensemble of the many trajectories it produces. Yet we still
lack a principled way to measure the complexity of that ensemble as a whole. Quantifying the complexity of a
system known only through its observed dynamics is itself a long-standing task: when the governing
equations are unavailable or too high-dimensional to be useful, the measured record alone must carry the
information. Entropy-rate estimators have become the established approach. They quantify the
rate at which a signal generates new information, equivalently the unpredictability of its next value
given its recent past. Approximate entropy~\cite{pincus1991} introduced a practical finite-sample
estimator suitable for short, noisy records; sample entropy~\cite{richman2000} removed its self-matching
bias; multiscale entropy~\cite{costa2002,costa2005} recognised that complexity lives across temporal
scales, distinguishing the rich structure of healthy physiology from the scale-collapsed regularity of
disease or the flatness of noise; and multivariate multiscale entropy~\cite{ahmed2011} extended the
construction to coupled channels. These methods sit within a broad family~\cite{wu2013,humeau2015}
alongside permutation entropy~\cite{bandt2002,riedl2013}, distribution entropy for short
records~\cite{li2015disten}, Lempel-Ziv complexity~\cite{lempel1976} and
the complexity-entropy plane that separates chaos from noise~\cite{rosso2007}, all resting on the
reconstruction of a state space by delay embedding (representing the system's hidden state by short stretches of the observed record)~\cite{takens1981,grassberger1983,kantz2004}. They have
been applied from neural recordings~\cite{stam2005} to the now-classical loss of physiological complexity
with disease and ageing~\cite{goldberger2002}.

Every one of these measures characterises a \emph{single} time series. Ensembles, by contrast, are
everywhere: the routes of a vehicle fleet, the orbits of a satellite constellation together with the
debris of a fragmentation event, the electrocardiograms of a clinical cohort, the Lagrangian paths of
particles in a turbulent flow, or the light curves of a variable-star survey. For such data the
object of interest is not the complexity of any single trajectory but the complexity of the system that
produced them all.

The obvious response, to compute a per-trajectory measure and average it over the ensemble, discards
exactly the information that makes the population a system. Consider two ensembles with identical mean
per-trajectory entropy: in the first, one intricate behaviour is repeated by every member; in the second,
many distinct but individually simple behaviours coexist. The first is redundant and the second diverse,
yet the mean cannot tell them apart. A faithful measure of system complexity must therefore combine two
ingredients that the average conflates: the \emph{within-trajectory} complexity of the members and the
\emph{between-trajectory} diversity of the population. The second is naturally captured by clustering the
trajectories into recurring behavioural patterns, a task for which a mature time-series clustering
literature provides the tools~\cite{aghabozorgi2015,paparrizos2015,lee2007,zheng2015}, and asking how the
population distributes over those patterns.

We are aware of no single summary that supplies both ingredients at once. Standard
practice in physiology and beyond computes a per-record measure and reports its cohort mean or
distribution~\cite{costa2002,goldberger2002}, characterising the typical member but, as argued
above, remaining blind to the ensemble's diversity. Recent ensemble variants of permutation and multiscale
entropy pool records to improve the \emph{estimation} of a single series rather than to measure the
complexity of a population~\cite{wu2013,humeau2015}. Conversely, the diversity of a population is studied
in its own right in ecology, where Rao's quadratic entropy and similarity-sensitive indices weight a
population's spread by pairwise trait dissimilarities, that is, by how different its members
are from one another; this line of work includes the functional-diversity and Hill-number
frameworks~\cite{rao1982,bottadukat2005,ricottaszeidl2006,leinstercobbold2012,
pavoinebonsall2011,petchey2002,villeger2008,chao2014}; but a measure of spread alone cannot distinguish a varied population of intricate
behaviours from an equally varied population of trivial ones. What is missing is a single quantity that
is simultaneously sensitive to how irregular the members are and to how varied the population is, and
that reduces to each of these established notions in the appropriate limit.

Here we introduce \emph{clustering-weighted multivariate multiscale sample entropy} (CWMMSE), which fuses
the two ingredients into a single number. Each trajectory is assigned its multivariate multiscale sample
entropy; the trajectories are grouped into patterns by clustering; and each pattern's mean complexity is
weighted by its contribution to the population's Shannon diversity. The construction is not ad hoc: the
result is exactly the \emph{weighted entropy} of Belis and Guia\c{s}u~\cite{belis1968,guiasu1971}, an
entropy of the population's pattern distribution in which each outcome counts in proportion to a utility,
here its own dynamical complexity. CWMMSE therefore has a clear information-theoretic meaning yet is no
harder to compute than the entropy and clustering steps it combines. It is also distinct from
the established diversity indices it superficially resembles, such as Rao's quadratic
entropy~\cite{rao1982,ricottaszeidl2006} and similarity-sensitive diversity~\cite{leinstercobbold2012},
which weight by pairwise dissimilarities rather than by intrinsic complexity. This distinction proves
decisive: a maximally spread but individually predictable ensemble is maximally diverse to those indices,
yet CWMMSE correctly returns near zero (\S\ref{sec:results}).

This framing makes a concrete, testable prediction: system complexity decomposes into individual
complexity and ensemble diversity, and the two genuinely diverge. We find ensembles that are individually
complex yet collectively redundant, ensembles that are maximally diverse yet individually trivial, and,
most notably, diseased clinical cohorts that match healthy ones in per-record complexity while
collapsing in ensemble diversity. In that last case the conventional averaged entropy not only
fails to detect the difference but orders the cohorts incorrectly, whereas CWMMSE reverses this.
Because both ingredients matter, measures built on either one alone fail: averaged entropy is blind to
diversity, while diversity indices, including the cluster entropy and Rao's quadratic entropy, rank a
cloud of perfectly predictable satellite orbits as the most complex system of all.

Our contributions are fourfold. First, we define CWMMSE and identify it as the Belis and Guia\c{s}u
weighted entropy of an ensemble's pattern distribution, together with a strong-consistency result for its empirical plug-in estimator under a fixed finite pattern partition
(\S\ref{sec:methods}). Second, we show that it decomposes system complexity into within- and
between-trajectory components, and that these components diverge both in a controlled construction and in
field data. Third, we establish that both ingredients are necessary by means of two controls that
single-ingredient measures fail, one of which, a population of highly diverse but individually predictable
trajectories, also separates CWMMSE precisely from functional-diversity indices such as Rao's quadratic entropy.
Fourth, we demonstrate the measure across eleven systems spanning the physical, environmental, engineering and
biomedical sciences, with subsample stability ranges and surrogate-data nulls.

The paper is organised as follows. The methods (\S\ref{sec:methods}) develop the measure and its
underpinning theory: the within-trajectory complexity, the clustering that exposes the population's
recurring patterns, their fusion into a weighted entropy, and the theorem giving its large-sample
behaviour. The results (\S\ref{sec:results}) evaluate the measure, from validation on deterministic chaos,
through demonstrations on field data and two decisive controls, to a cross-domain survey of eleven systems
and a robustness analysis. The discussion (\S\ref{sec:discussion}) relates CWMMSE to entropy-rate and
functional-diversity measures and delimits its scope and the questions it leaves open, and the conclusions
(\S\ref{sec:conclusions}) synthesise the argument. The full algorithm, the consistency proof, the complete
benchmark and robustness studies, and a dataset-by-dataset account are collected in the electronic
supplementary material.

\section{Material and methods}\label{sec:methods}

\begin{figure}[tb]
\centering
\includegraphics[width=\linewidth]{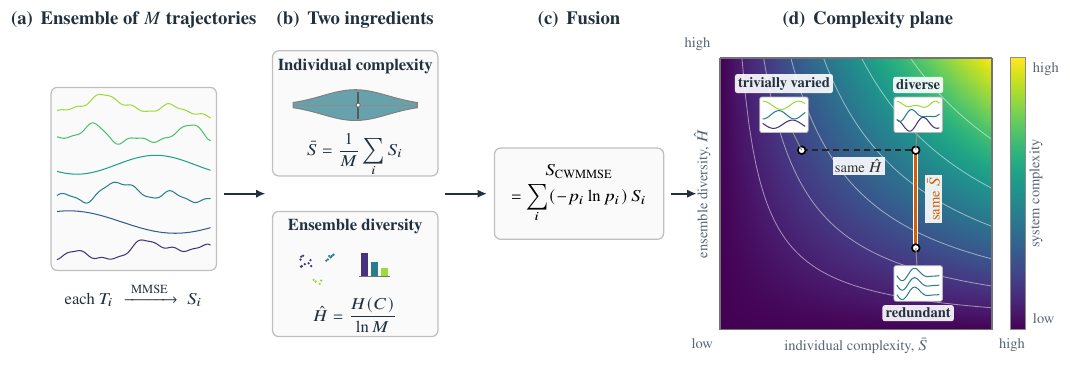}
\caption{CWMMSE is built in four steps, fusing within-trajectory complexity and ensemble diversity into a single weighted entropy. (a) A system is observed as an ensemble of $M$ multivariate trajectories $T_1,\dots,T_M$. (b) Two ingredients are evaluated together: the individual complexity $\bar S$ (mean within-trajectory MMSE), where each member's multivariate multiscale sample entropy (MMSE) is $S(T_j)$ and the mean is $\bar S=\frac{1}{M}\sum_j S(T_j)$; and the ensemble diversity $\hat H=H(C)/\ln M$ (normalised cluster entropy), found by clustering the trajectories into behavioural patterns with population shares $p_i$. (c) The two are fused into the weighted entropy $S_{\mathrm{CWMMSE}}=\sum_i(-p_i\ln p_i)\,S_i$ (equation~\eqref{eq:cwmmse}), the weighted entropy of Belis and Guia\c{s}u. (d) The complexity plane spanned by the two ingredients, used throughout the paper. The shading (dark for low, bright for high) and white iso-lines show the separable baseline $H(C)\,\bar S$ (in the later complexity-plane figures the marker colour instead encodes CWMMSE itself), to which CWMMSE adds the complexity--diversity coupling of equation~\eqref{eq:coupling}, so system complexity is high only when both ingredients are high. Three specimen ensembles make the point: the diverse ensemble (individually complex \emph{and} behaviourally varied) is high, whereas the redundant ensemble (one behaviour repeated, sharing the diverse ensemble's $\bar S$ along the vertical ``same~$\bar S$'' line) and the trivially varied ensemble (many but individually simple members, sharing its $\hat H$ along the horizontal ``same~$\hat H$'' line) are both low, so neither ingredient alone suffices.}
\label{fig:schematic}
\end{figure}

We observe a system as an ensemble $\mathcal{D}=\{T_1,\dots,T_M\}$ of $M$ multivariate time series, each
generated by the same underlying process under different initial conditions, individuals or windows of
observation. Every trajectory $T_j$ has $c$ channels and length $N_j$; lengths may differ across datasets, though within a given ensemble the trajectories share a common per-dataset window length (ESM), and
$M$ may be large. The $c$ channels are the problem-specific observables, and the per-dataset choices are
listed in the electronic supplementary material (ESM). A trajectory may be a complete object, such as the
velocity record of one drifter, or a fixed-duration window cut from a longer record; the two readings are
formally identical, differing only in whether the measured diversity is across coexisting members or
across behaviours one system visits over time. The construction proceeds in three stages
(Figure~\ref{fig:schematic}).

\subsection*{Within-trajectory complexity}
The first ingredient is the multivariate multiscale sample entropy (MMSE)\footnote{Throughout, MMSE
abbreviates \emph{multivariate multiscale sample entropy}; it is unrelated to the identically abbreviated
minimum mean-square error.} of each trajectory~\cite{ahmed2011,
costa2002}, built on the sample entropy of Richman and Moorman~\cite{richman2000}. Write $x^{(a)}_{l}$ for
sample $l$ of channel $a$ of the trajectory, for $a=1,\dots,c$. At temporal scale $s$ each channel is
coarse-grained by averaging successive non-overlapping blocks of $s$ samples,
\begin{equation}\label{eq:cg}
y^{(a),s}_{t}=\frac{1}{s}\sum_{l=(t-1)s+1}^{ts} x^{(a)}_{l},\qquad t=1,\dots,\lfloor N_j/s\rfloor ,
\end{equation}
where $y^{(a),s}_{t}$ is the $t$-th coarse-grained value of channel $a$ and $t$ indexes the coarse-grained
time points. On the coarse-grained channels we form composite delay vectors by delay embedding~\cite{takens1981},
stacking time-delayed copies of each channel so that a short stretch of the record stands in for the
system's unobserved state, with per-channel embedding dimensions $\mathbf{m}=(m_1,\dots,m_c)$ and delays
$\boldsymbol{\tau}=(\tau_1,\dots,\tau_c)$, of total dimension $d=\sum_{a=1}^{c} m_a$. After per-channel normalisation, two distinct
vectors are neighbours when their Chebyshev distance, the largest difference across their coordinates, is
at most a tolerance $r$. Let $B_d(r)$ be the fraction of neighbouring pairs at total embedding
dimension $d$, and $B_{d+c}(r)$ the fraction when every channel's embedding is extended by one further
sample (dimension $d+c$). Both fractions are counted over the template indices admissible at the larger dimension $d+c$,
so that their ratio lies in $[0,1]$. The sample entropy at scale $s$ is then
\begin{equation}\label{eq:sampen}
S^{(s)}(T_j)=-\ln\!\big(B_{d+c}(r)/B_d(r)\big),
\end{equation}
the negative log of the conditional probability that two segments matching at embedding dimension $d$ still match
when extended by one further sample per channel. This simultaneous all-channel extension is the convention
of our released implementation. It coincides with the multivariate sample entropy of Ahmed and
Mandic~\cite{ahmed2011} for a single channel and, for $c>1$, is a distinct construction that extends all $c$
channels jointly rather than one at a time (its embedding grows by $c$ at each step). We adopt it for channel
symmetry and implementation simplicity, and apply it identically across every system (ESM).
The within-trajectory complexity $S(T_j)$ is taken at a single scale or summed over scales as suits the
data, the per-system scale choice being listed in the ESM (several short-window systems use a single
scale). Because $r$ acts on normalised channels, $S$ is dimensionless and comparable across heterogeneous
data. The full algorithm, including the order $N_j\log N_j$ neighbour counting by a $k$-d tree, is in the ESM.

\subsection*{From trajectories to patterns}
The second ingredient is the diversity of the ensemble. Given a dissimilarity between trajectories we
partition $\mathcal{D}$ into clusters $C_1,\dots,C_k$ by agglomerative Ward linkage~\cite{ward1963} cut at a chosen
height~\cite{aghabozorgi2015}: trajectories are merged bottom-up into ever larger groups, and cutting the
resulting tree at a height fixes how dissimilar two trajectories may be while still counting as the same
behaviour. The cut height is an explicit \emph{resolution}: it sets the scale at which
two trajectories count as the same behaviour, and is not an attempt to recover a true number of clusters.
We express it as a fraction of the maximum linkage distance so that a single relative threshold transfers
across datasets. The dissimilarity itself is a modelling choice with a scale. Our default sums the
point-wise Euclidean distance over time, which depends on amplitude and length; within each ensemble this point-wise sum is evaluated over members sharing the common per-dataset window length $N$ (ESM), so the comparison is defined index by index. We verify in
\S\ref{sec:results} that every contrast also holds under a scale- and length-invariant normalisation.
Clustering, rather than a parametric mixture or a kernel density, is a deliberate choice. It assumes
nothing about the geometry of the space of trajectories, accommodates the heterogeneous, unequal-length
and irregularly sampled records that real ensembles present, and exposes the population's pattern
distribution directly as the cluster proportions $p_i$. Its one cost is the resolution parameter,
which we therefore state explicitly and report across a range rather than fixing to a single value.
Ward linkage is our default, chosen for its within-cluster-variance criterion; because that criterion is
strictly appropriate to a squared-Euclidean representation whereas our default dissimilarity is a general
point-wise distance, we verify that average linkage, which is compatible with a general dissimilarity,
yields the same featured contrasts (\S\ref{sec:results}), retaining average and complete linkage as
sensitivity checks.

\subsection*{Clustering-weighted MMSE}
The two ingredients now combine into a single number. Let $p_i=|C_i|/M$ be the fraction of trajectories in cluster $i$, and $S_i$ the mean within-trajectory
complexity of its members. The clustering-weighted multivariate multiscale sample entropy is
\begin{equation}\label{eq:cwmmse}
S_{\mathrm{CWMMSE}}=\sum_{i=1}^{k}\big(-p_i\ln p_i\big)\,S_i ,
\end{equation}
with the convention $0\ln 0=0$. Each pattern contributes its complexity $S_i$ weighted by $-p_i\ln p_i$,
its share of the population's Shannon diversity. The two ingredients in isolation are the mean MMSE
$\bar S=\sum_i p_i S_i$, the ensemble average that discards diversity and is the natural baseline, and the
Shannon cluster entropy $H(C)=-\sum_i p_i\ln p_i\in[0,\ln k]$, which we report through the
\emph{normalised cluster entropy} $\hat H=H(C)/\ln M\in[0,1]$, dividing by the maximum $\ln M$ attained when
every trajectory forms its own pattern, and which discards individual complexity. Because this denominator
grows with the ensemble size while $H(C)$ is bounded above by $\ln k$, $\hat H$ is a finite-sample
descriptive normalisation: under a fixed finite population partition it tends to zero as $M\to\infty$, so we
read it within a study rather than as an absolute quantity comparable across studies of differing ensemble
size unless recomputed at a common $M$.

\subsection*{CWMMSE is a weighted entropy}
Equation~\eqref{eq:cwmmse} is an instance of a classical functional. The weighted entropy of Belis and
Guia\c{s}u~\cite{belis1968,guiasu1971} attaches a non-negative utility $w_i$ to each outcome of a
distribution,
\begin{equation}\label{eq:weighted}
H^{w}=-\sum_{i=1}^{k} w_i\,p_i\ln p_i ,
\end{equation}
and identifying $w_i$ with the within-cluster complexity $S_i$ turns equation~\eqref{eq:weighted} into
equation~\eqref{eq:cwmmse} exactly. CWMMSE is thus the weighted Shannon entropy~\cite{shannon1948,
cover2006} of the population's pattern distribution, each outcome counting in proportion to how
dynamically complex the trajectories realising it are. Weighted entropy was used by
Guia\c{s}u~\cite{guiasu1986} to group data into classes, exactly the clustered-ensemble setting here, and
is the across-ensemble counterpart of the within-series weighted permutation
entropy~\cite{fadlallah2013}, which weights each ordinal pattern of one series by its local amplitude
variance. The novelty of CWMMSE is the choice of dynamical complexity as the weight, the clustered-ensemble
application, and the following large-sample consistency result.

\begin{theorem}[Consistency of empirical CWMMSE at fixed resolution]\label{thm:weighted}
Fix a pattern resolution, so that a random trajectory takes values in a finite set of patterns
$i=1,\dots,k$ with fixed probabilities $p_i>0$, and, given the pattern, let the within-trajectory
complexities be independent and identically distributed with finite mean $S_i$. From $M$
independent draws form the empirical frequencies $\hat p_i$ and within-pattern means $\hat S_i$. Then, as
$M\to\infty$, $\sum_i(-\hat p_i\ln\hat p_i)\hat S_i\to-\sum_i p_i\ln p_i\,S_i$ almost surely. For
$S_i\equiv1$ this recovers convergence of the empirical Shannon entropy.
\end{theorem}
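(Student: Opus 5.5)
The plan is to reduce the statement to the strong law of large numbers applied twice, once to the pattern labels and once to the complexities within each pattern, and then pass to the limit by continuity. Model the $M$ draws as i.i.d.\ pairs $(I_j,Z_j)$, where $I_j\in\{1,\dots,k\}$ is the pattern label with $\Pr(I_j=i)=p_i$ and $Z_j$ is the within-trajectory complexity. Conditional on $I_j=i$, the variable $Z_j$ has finite mean $S_i$.

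\emph{Step 1 (frequencies).} Apply the strong law of large numbers to the indicators $\mathbf 1\{I_j=i\}$. This gives $\hat p_i=M^{-1}\sum_j\mathbf 1\{I_j=i\}\to p_i$ almost surely for each of the finitely many $i$. A finite intersection of almost-sure events is almost sure, so all $k$ frequencies converge simultaneously.

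\emph{Step 2 (within-pattern means).} Write
\[
\hat S_i=\frac{M^{-1}\sum_j Z_j\mathbf 1\{I_j=i\}}{M^{-1}\sum_j\mathbf 1\{I_j=i\}}.
\]
The pairs $Z_j\mathbf 1\{I_j=i\}$ are i.i.d.\ with $\mathbb E|Z_j\mathbf 1\{I_j=i\}|=p_i\,\mathbb E[|Z|\mid I=i]<\infty$. The strong law therefore gives numerator $\to p_iS_i$ almost surely, and by Step 1 the denominator tends to $p_i>0$. Hence, almost surely, the denominator is eventually positive, so $\hat S_i$ is eventually well defined. On that event $\hat S_i\to S_i$ by continuity of division away from zero.

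The only delicate point is that $\hat S_i$ is undefined while pattern $i$ is empty. Here the hypothesis $p_i>0$ is essential. Any fixed convention for an empty pattern (say $\hat S_i=0$) applies to only finitely many $M$ almost surely, so it does not affect the limit. If one instead prefers conditioning on the label sequence, the same conclusion follows from the strong law applied along the random subsequence of indices in pattern $i$, which is infinite almost surely. I would use the ratio argument, since it avoids subsequence measurability issues. The finite-mean hypothesis is exactly what this step needs, and no higher moments are required.

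\emph{Step 3 (continuity).} The map $(q_1,\dots,q_k,w_1,\dots,w_k)\mapsto\sum_i(-q_i\ln q_i)w_i$ is continuous on $[0,1]^k\times\mathbb R^k$, using $0\ln0=0$ and the continuity of $q\ln q$ at $0$. On the almost-sure event where Steps 1 and 2 hold, the continuous mapping theorem gives
\[
\sum_i(-\hat p_i\ln\hat p_i)\hat S_i\;\longrightarrow\;-\sum_i p_i\ln p_i\,S_i .
\]
Setting $S_i\equiv1$ makes $\hat S_i\equiv1$ trivially, which recovers almost-sure convergence of the plug-in Shannon entropy on a finite alphabet.
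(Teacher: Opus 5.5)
Your proof is correct and follows the paper's argument: the strong law gives $\hat p_i\to p_i$, a second application gives $\hat S_i\to S_i$, and continuity of $(\mathbf p,\mathbf S)\mapsto\sum_i(-p_i\ln p_i)S_i$ at the limit finishes the proof, with $p_i>0$ doing the same work in both. The one difference is in Step 2, and it is minor: the paper applies the strong law along the random, almost-surely divergent count $M_i$, whereas you write $\hat S_i$ as a ratio of two full-sequence averages, which spares you the (unstated in the paper) check that the within-pattern subsequence is itself i.i.d.
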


The proof (ESM) applies the strong law of large numbers to $\hat p_i$ and $\hat S_i$ and the continuous
mapping theorem. The result holds at a fixed resolution; it does not cover the regime in which the cut
height is held fixed as $M$ grows, where the number of patterns $k$ increases with $M$ and no fixed
$p_i>0$ exists (in the chaotic and several field ensembles $k$ is a large fraction of $M$). It likewise
treats the pattern assignment as a fixed measurable map with fixed probabilities $p_i$, and so does not
establish consistency of the sample-dependent Ward clustering actually used, whose partition is estimated
from the same data. Finite-sample
behaviour is characterised empirically below by subsampling, since each $\hat S_i$ is itself a
finite-length sample-entropy estimate rather than a converged entropy rate. The same argument gives identical consistency for the separable
product $H(C)\,\bar S$, so the result does not by itself distinguish the two.

\subsection*{Why this weighting}
Two degenerate limits show that both ingredients are indispensable and enter jointly. If a single pattern
dominates, $p_1\to1$, then $-p_1\ln p_1\to0$ and $S_{\mathrm{CWMMSE}}\to0$ however large $S_1$: a system
that endlessly repeats one behaviour, however intricate, is simple as a system. If every pattern is
trivial, $S_i\to0$, then $S_{\mathrm{CWMMSE}}\to0$ however diverse the population. The mean $\bar S$
violates the first limit and the cluster entropy $H(C)$ the second. A natural repair is the separable
product $H(C)\,\bar S$, which respects both limits but coincides with CWMMSE only when complexity and diversity are uncoupled
across clusters (in particular when complexity is homogeneous), since
\begin{equation}\label{eq:coupling}
\begin{aligned}
S_{\mathrm{CWMMSE}}-H(C)\,\bar S
&=\sum_{i=1}^{k}\big(-p_i\ln p_i\big)\big(S_i-\bar S\big)\\
&=\sum_{i=1}^{k}p_i\big(S_i-\bar S\big)\big(-\ln p_i-H(C)\big)\\
&=\operatorname{Cov}_p\!\big(S_i,\,-\ln p_i\big),
\end{aligned}
\end{equation}
the population-weighted covariance between each pattern's complexity $S_i$ and its surprisal $-\ln p_i$ (the
added $H(C)\sum_i p_i(S_i-\bar S)=0$ centres the surprisal, using $\mathbb{E}_p[-\ln p_i]=H(C)$ and
$\mathbb{E}_p[S_i]=\bar S$). CWMMSE
retains this coupling and the separable product discards it; we show in \S\ref{sec:results} that the
coupling is non-negligible in real data, which is the principled reason to prefer the single
weighted-entropy object over the product. The convex normalisation $S_{\mathrm{CWMMSE}}/H(C)$ is a
surprisal-weighted average of the cluster complexities (weights $-p_i\ln p_i$), distinct from the
population mean $\bar S$, and discards the diversity magnitude, so it is rejected. The weighted entropy also has a natural place among the
generalisations of Shannon's measure: where the R\'enyi~\cite{renyi1961} and Tsallis~\cite{tsallis1988}
families deform the entropy along an order parameter and recover Shannon as that order tends to one, the
weighted entropy deforms it along an orthogonal utility axis, here the dynamical complexity of each
pattern, and recovers Shannon when the utilities are equal. We do not claim that this form is the unique
functional satisfying the two limits. Its appeal
is that it is a single established, interpretable object that meets both limits and, unlike the separable
product, retains the complexity-diversity coupling. A fuller comparison of this functional family is given
in the ESM.

\subsection*{Implementation}
The entropy, clustering and weighting steps are collected in a single open-source Python package and
applied by the same procedure to each of the eleven systems we analyse (\S\ref{sec:results}). These differ
only in per-dataset settings, such as the embedding, tolerance, window length and clustering resolution
(all listed in the ESM), and not in the analysis itself, so the cross-domain comparison uses one measure
rather than eleven tuned variants. Every figure and number is regenerated from openly available data (Data
accessibility). The cost is dominated by the per-trajectory sample entropy, reduced to order $N_j\log N_j$ by
$k$-d-tree neighbour counting, and by the $O(M^2)$ trajectory dissimilarity; both are modest for the
ensembles studied here (up to $M\approx1700$), and the dissimilarity is readily parallelised.

\section{Results}\label{sec:results}

We evaluate CWMMSE on eleven systems drawn from the physical, environmental, engineering and biomedical
sciences (Table~\ref{tab:cross} lists each with its data source and contrast; per-system settings, sample
sizes and illustrations are in the ESM). They play four roles. Deterministic-chaos ensembles from the
Lorenz and R\"ossler systems~\cite{lorenz1963,rossler1976} provide a validation, since the complexity ordering across their regimes is
known in advance. Ocean drifters~\cite{elipot2016drifter} from two contrasting regions of the North Atlantic demonstrate the
decomposition, the divergence of individual complexity from ensemble diversity. Twelve-lead
electrocardiograms~\cite{wagner2020ptbxl,goldberger2000} from a healthy and a myocardial-infarction cohort show CWMMSE correcting a misranking by
averaged entropy. Ensembles of satellite orbits and breakup debris~\cite{vallado2006,celestrak} serve as a null control: they are
geometrically diverse yet individually predictable, so a diversity-only measure would misrank them. The
remaining seven systems form a cross-domain survey that tests whether the same decomposition recurs without
per-system tuning: fluid turbulence~\cite{weine2024cmrsim}, tropical-cyclone tracks~\cite{knapp2010},
maritime vessel traffic~\cite{marinecadastre}, a broadband record
of the 2011 Tohoku earthquake~\cite{beyreuther2010}, variable-star light curves~\cite{dau2019ucr},
neurodegenerative gait~\cite{hausdorff2000}, and a continental GNSS
network~\cite{blewitt2018ngl}.

The section is organised around these roles. We first validate CWMMSE on deterministic chaos, then use the
ocean drifters to separate the two ingredients and the electrocardiogram cohorts to show that averaging can
rank a diseased population as the more complex while CWMMSE reverses it. Two controls follow, the cardiac
contrast that a mean-only measure fails and the orbital null that a diversity-only measure fails,
establishing that both ingredients are necessary and motivating the weighted-entropy form. We then survey
the seven cross-domain systems, place all eleven in the complexity plane, and test robustness to the
clustering resolution, the trajectory distance and the linkage rule.

Throughout, trajectories are clustered by Ward linkage~\cite{ward1963} on the default distance and CWMMSE uses
equation~\eqref{eq:cwmmse}. Reported intervals are $2.5$--$97.5$ percentile subsample stability ranges
(size $0.8M$, sampled without replacement). Because the subsamples overlap heavily, they summarise the
stability of the estimate and understate the true sampling variance; we therefore report them as stability
ranges rather than confidence intervals. We prefer a single such convention across all functionals to the
closed-form DeLong AUC variance~\cite{delong1988}. A case-resampling bootstrap over trajectories is
inappropriate here, because duplicate draws lie at zero distance and distort the clustering.

\subsection*{Validation against known dynamics}
CWMMSE recovers the complexity ordering that the dynamics dictate. For several regimes of the
Lorenz~\cite{lorenz1963} and R\"ossler~\cite{rossler1976} systems we integrate
an ensemble of $100$ trajectories from perturbed initial conditions and compute CWMMSE
(Table~\ref{tab:validation}, Figure~\ref{fig:validation}). Within each system CWMMSE is largest in the
chaotic regime, Lorenz $\rho=28$ ($S_{\mathrm{CWMMSE}}=0.95$) and R\"ossler in its chaotic regime ($0.12$), and falls
to near zero at the fixed-point and low-period regimes, recovering the ordering expected from the
dynamics. In the chaotic regime each ensemble becomes maximally diverse ($\hat H\to1$, every trajectory its own
pattern), and the Lorenz ensemble is individually complex as well. A spectrum-matched surrogate test confirms that
this responds to nonlinear structure rather than to the linear spectrum. We replace each trajectory by an
iterative amplitude-adjusted Fourier-transform surrogate~\cite{theiler1992,schreiber1996}, which preserves
the power spectrum and amplitude distribution but randomises phases. This raises the chaotic Lorenz CWMMSE of a
60-trajectory subsample from $0.84$ to $8.3$ (mean across $999$ surrogate ensembles), below every one of them (one-sided $p=0.001$). That
the surrogate scores far higher is itself informative: sample entropy rises with per-trajectory
unpredictability, so noise of the same spectrum is more irregular per trajectory than deterministic motion
and scores above it. CWMMSE thus distinguishes nonlinear structure from the linear spectrum, and within a family of comparable
stochasticity it places chaos above order.

\begin{table}[tb]
\centering\footnotesize
\caption{CWMMSE on Lorenz and R\"ossler ensembles ($100$ trajectories each). $k$, number of clusters;
$\hat H$, normalised cluster entropy; $\bar S$, mean MMSE. CWMMSE peaks in the chaotic regime of each
system.}\label{tab:validation}
\begin{tabular}{llrrrr}
\toprule
System & Regime & $k$ & $\hat H$ & $\bar S$ & CWMMSE\\
\midrule
Lorenz   & fixed point      & 1   & 0.00 & 0.00 & 0.00\\
Lorenz   & transition       & 2   & 0.09 & 0.00 & 0.00\\
Lorenz   & chaotic          & 100 & 1.00 & 0.21 & \textbf{0.95}\\
Lorenz   & periodic         & 66  & 0.87 & 0.14 & 0.57\\
R\"ossler & periodic        & 3   & 0.24 & 0.03 & 0.03\\
R\"ossler & quasi-periodic  & 32  & 0.54 & 0.02 & 0.04\\
R\"ossler & chaotic         & 89  & 0.97 & 0.03 & \textbf{0.12}\\
\bottomrule
\end{tabular}
\end{table}

\begin{figure}[tb]
\centering
\includegraphics[width=\linewidth]{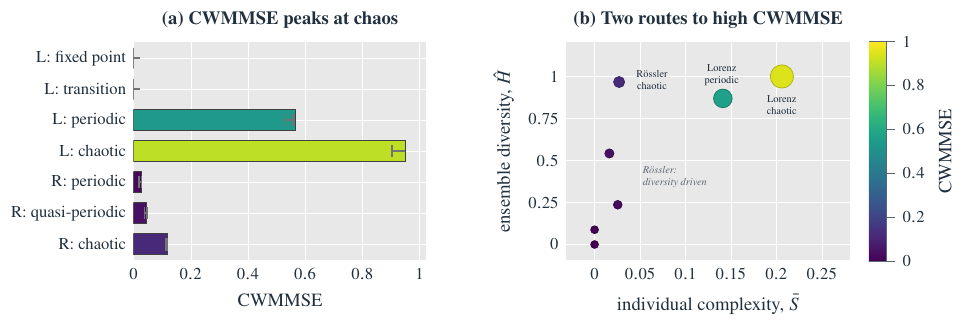}
\caption{Validation on deterministic chaos, where the correct complexity ordering is known in advance. Ensembles of $100$ trajectories are drawn from the Lorenz (L) and R\"ossler (R) systems in regimes ranging from a fixed point through periodic motion to fully developed chaos. (a) CWMMSE of each regime; bar height encodes CWMMSE, with a colour scale from dark (low CWMMSE) to bright (high CWMMSE), and the whiskers are $2.5$--$97.5$ percentile subsample intervals. The value is near zero for the fixed-point and periodic regimes and peaks in the chaotic regime of each system, recovering the expected ordering. (b) The same ensembles in the plane of the two ingredients, individual complexity $\bar S$ (mean within-trajectory MMSE) on the horizontal axis against ensemble diversity $\hat H=H(C)/\ln M$ (normalised cluster entropy) on the vertical, with marker area proportional to CWMMSE and the same dark-to-bright colour scale. CWMMSE is largest where both ingredients are large, at the chaotic Lorenz regime (top right); regimes that are diverse but individually simple (chaotic R\"ossler, top left) or low in either ingredient carry small CWMMSE.}
\label{fig:validation}
\end{figure}

A controlled construction isolates the mechanism. With $k$ equal clusters each of uniform complexity $S$
one finds $S_{\mathrm{CWMMSE}}=S\ln k$ exactly: the mean MMSE then depends only on $S$, the cluster
entropy only on $k$, and CWMMSE on both. Hand-built ensembles (ESM) confirm that CWMMSE, unlike either
ingredient alone, singles out a population that is simultaneously diverse and individually complex.

\subsection*{Individual complexity versus ensemble diversity: ocean drifters}
With the measure validated on dynamics we control, we turn to field data, where the two ingredients diverge. From hourly surface-velocity records of the NOAA Global Drifter
Program~\cite{elipot2016drifter}, we form ensembles for an energetic western-boundary-current region (Gulf
Stream) and a quiescent gyre interior (Sargasso Sea), using $120$-hour windows of the two velocity
components. The Gulf Stream is more complex \emph{individually}, its drifters caught in eddies and
reversing direction, with the higher mean MMSE ($\bar S=0.87$ against $0.76$); yet it is the simpler
\emph{system}, with the lower CWMMSE ($2.08$ against $2.86$). The energetic flow entrains its drifters
into a few shared coherent structures ($\hat H=0.46$), whereas the quiescent gyre supports a far more
diverse population of paths ($\hat H=0.66$) (Figure~\ref{fig:decomp}, top). Averaged entropy and CWMMSE
thus rank the two regions oppositely, with the difference residing entirely in the diversity term; the
ranking is in the stated direction in $99.9\%$ of subsamples. The mechanism is physical. The western
boundary current organises the flow into a few coherent meanders and rings that many drifters follow in
common, so individually tortuous paths are collectively redundant. The gyre interior lacks such organising
structures, and each drifter traces its own slow path, leaving the population as a whole more varied. The
two regions are thus ranked oppositely by individual complexity and by system complexity, the divergence
the decomposition is designed to expose.

\begin{figure}[tb]
\centering
\includegraphics[width=\linewidth]{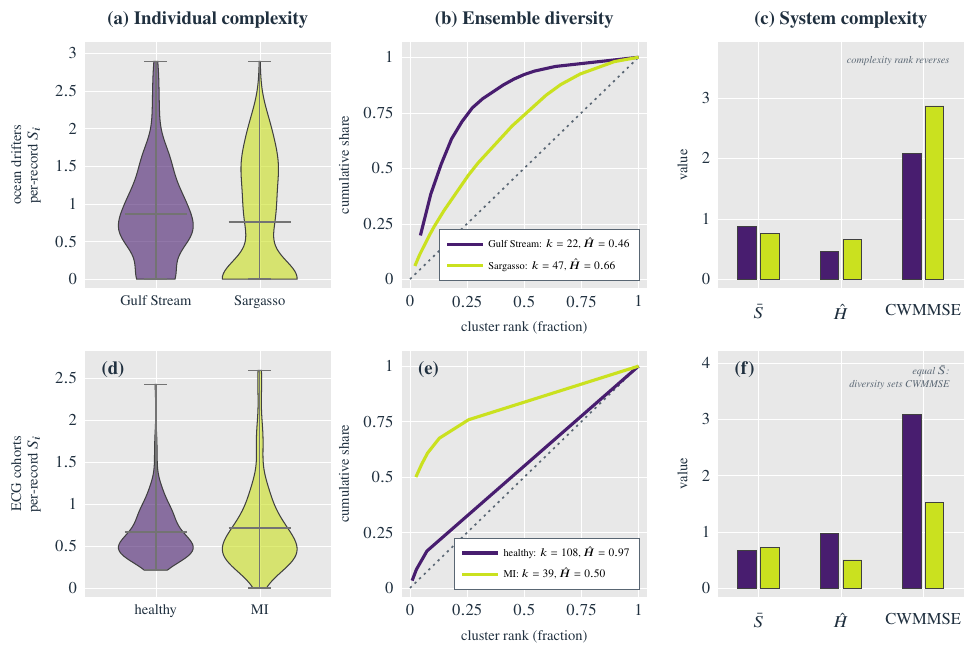}
\caption{The decomposition in field data: the individually more complex group can be the less complex \emph{system}. Two contrasts are shown: ocean-drifter regions (top) and ECG cohorts (bottom), with the two groups of each contrast drawn in two distinct colours, violet (Gulf Stream; NORM) and yellow-green (Sargasso; MI). (a,d) Violin distributions of the per-record within-trajectory complexity $S(T_j)$: the Gulf Stream is individually a little more complex than the Sargasso, whereas the healthy (NORM) and infarction (MI) hearts are nearly equal. (b,e) Cumulative share of members against cluster rank (clusters ordered largest first), annotated with the cluster count $k$ and the normalised cluster entropy $\hat H=H(C)/\ln M$; a steeper curve means a few clusters hold most members, that is a more redundant, less diverse ensemble (the Gulf Stream, $\hat H=0.46$; the MI cohort, $\hat H=0.50$). (c,f) The two ingredients $\bar S$ and $\hat H$ and the resulting CWMMSE side by side. In each row the diversity term overturns the individual-complexity ranking, so the more diverse group has the larger CWMMSE (Sargasso $2.86$ against Gulf Stream $2.08$; healthy above MI), although an averaged complexity would favour the individually more complex Gulf Stream and would rank the diseased MI hearts above the healthy ones.}
\label{fig:decomp}
\end{figure}

\subsection*{A sign reversal of averaging: ECG cohorts}
On clinical electrocardiograms the conventional averaged
entropy ranks diseased hearts as more complex than healthy ones; any diversity-aware reading reverses that
verdict, and the reversal exposes population structure that averaging hides. On
12-lead electrocardiograms from the PTB-XL database~\cite{wagner2020ptbxl,goldberger2000}, using leads I,
II and V2 with each 10-second record as one multivariate trajectory, the healthy (NORM) and
myocardial-infarction (MI) cohorts are nearly indistinguishable at the level of the individual: their mean
MMSE is $0.665$ and $0.716$, the diseased cohort marginally \emph{higher}. As a population, however, the healthy cohort is far more diverse ($\hat H=0.97$, $k=108$)
than the diseased one ($\hat H=0.50$, $k=39$), so CWMMSE separates them by roughly a factor of two
($3.08$ against $1.51$; Figure~\ref{fig:decomp}, bottom). This complements the classical loss of physiological
complexity with disease and ageing~\cite{goldberger2002,costa2002,costa2005}: the per-record complexity here
runs the other way, and the cohort-level signature is instead a
contraction of ensemble diversity invisible to single-series measures, which we read as lower system
complexity. The reading is clinically intelligible: a healthy population samples a wide repertoire of normal
cardiac morphologies, whereas infarction imposes a few stereotyped abnormal patterns that contract the
population's diversity even as each diseased record remains locally irregular. The contraction is a
property of the population, not of any individual record, which is exactly why a measure that averages
over records cannot detect it.

A subsampling benchmark quantifies the consequence (Table~\ref{tab:ecg}, Figure~\ref{fig:ecg}). Drawing
size-50 subsamples from each cohort and asking how reliably a functional ranks a healthy subsample above a
diseased one, we find the averaged MMSE attains an area under the curve (AUC) of only $0.23$, below the chance
value of $0.5$: it ranks the cohorts in the wrong direction, assigning higher per-record complexity to the diseased cohort.
CWMMSE separates the cohorts with AUC $0.92$ and a large effect size ($|d|=2.1$, the cohort
means differing by about two standard deviations). An
external functional-diversity index, Rao's quadratic entropy in its uniform-abundance form (the mean pairwise distance), also
misranks the cohorts (AUC $0.004$), because raw pairwise diversity is dominated by amplitude rather than by
dynamical complexity. Two interpretations of this contrast require qualification. First, the cluster entropy alone separates
the cohorts even more sharply than CWMMSE (AUC $0.995$), which might suggest that diversity alone suffices;
the orbital null below shows that it does not. Second, the
separable product $H(C)\,\bar S$ matches CWMMSE on this task (AUC $0.91$ against $0.92$), so the cardiac
separation establishes the value of a two-ingredient measure but cannot by itself favour the
weighted-entropy form over the product. The next two results resolve both points: the orbital null, and the
complexity--diversity coupling of equation~\eqref{eq:coupling}.

\begin{table}[tb]
\centering\footnotesize
\caption{Separating healthy from MI ECG subsamples (clustering resolution $0.3$). AUC is the probability
that the functional ranks a healthy subsample above an MI one; $0.5$ is no separation and below $0.5$ is
the wrong direction. $|d|$, Cohen's effect size; brackets, subsample stability ranges (not confidence intervals).}\label{tab:ecg}
\begin{tabular}{lrr}
\toprule
Functional & AUC & $|d|$\\
\midrule
mean MMSE $\bar S$            & 0.23~[0.20, 0.26] & 1.06\\
cluster entropy $H(C)$        & 0.995~[0.99, 1.00] & 2.88\\
$H(C)\,\bar S$ (product)      & 0.91~[0.89, 0.93] & 2.09\\
Rao $Q$ (pairwise diversity)  & 0.004~[0.00, 0.01] & 3.65\\
CWMMSE                        & \textbf{0.92~[0.89, 0.94]} & \textbf{2.14}\\
\bottomrule
\end{tabular}
\end{table}

\begin{figure}[tb]
\centering
\includegraphics[width=\linewidth]{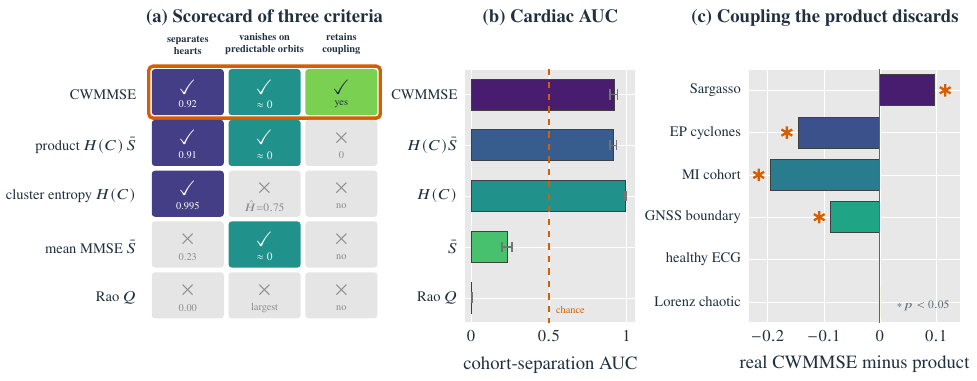}
\caption{The two controls and the complexity-diversity coupling, summarised as a scorecard. Five ensemble functionals are scored: the mean multivariate multiscale sample entropy (MMSE) $\bar S$, the cluster entropy $H(C)$, their separable product $H(C)\bar S$, the pairwise-diversity Rao $Q$, and CWMMSE. (a) Each functional is checked against three criteria, laid out as a grid with passing cells coloured by criterion and the value shown. The criteria are: whether it separates the PTB-XL healthy (NORM) and myocardial-infarction (MI) cohorts (cohort-separation area under the curve, AUC, above the chance value of $0.5$); whether it vanishes on a maximally spread but dynamically trivial ensemble of predictable orbits (the orbital null); and whether it retains the complexity-diversity coupling of equation~\eqref{eq:coupling}. Only CWMMSE passes all three (highlighted row): the single-ingredient measures each fail a control, and the separable product passes both controls but discards the coupling by construction. (b) The cardiac cohort-separation AUC of each functional, with $95\%$ intervals, at clustering resolution $0.3$ (dashed line, chance). The diversity-only $H(C)$ separates the hearts most sharply, yet fails the orbital control in (a). (c) The coupling itself, the real CWMMSE minus the separable product, across systems. It is significant (starred, $p<0.05$) in several real ensembles and negligible only where cluster complexities are homogeneous. This is the information that the product, and every single-ingredient measure, discards.}
\label{fig:ecg}
\end{figure}

\subsection*{A complexity control, and why both ingredients are needed}
Diversity alone is not complexity. For ensembles of low-Earth-orbit objects propagated from two-line
elements~\cite{vallado2006,celestrak} (the standard catalogue records of each object's orbit), comprising an intact
constellation and two breakup debris clouds, the
normalised cluster entropy is the \emph{highest} of any system we study ($\hat H\approx0.75$) and the
mean pairwise (Rao) diversity is likewise the largest, since the objects populate many distinct orbital
planes; yet individual orbital motion is essentially perfectly predictable, so the per-trajectory
complexity is near zero and $S_{\mathrm{CWMMSE}}\approx0$ throughout (Figure~\ref{fig:ecg}a, with the
quantitative orbital null given in the ESM). A measure of
diversity alone, whether the cluster entropy or Rao's quadratic entropy, would rank these objects the most
complex systems studied; CWMMSE correctly returns near zero. The breakup debris clouds score marginally
above the uniform constellation, because fragmentation imparts small, slightly irregular perturbations to
otherwise regular orbits; CWMMSE registers this faint signal, whereas a saturated diversity measure cannot.

Together, the two controls establish what a measure of system complexity must contain. The mean MMSE fails the cardiac contrast; the
cluster entropy and Rao's quadratic entropy fail the orbital null. The functionals that pass both are
exactly those built from both ingredients: the separable product and CWMMSE. CWMMSE is preferred over the
product because it is a single, established information-theoretic object (equation~\eqref{eq:weighted})
rather than an ad hoc product, and because it retains the complexity-diversity coupling of
equation~\eqref{eq:coupling} that the product discards; the consistency of Theorem~\ref{thm:weighted} holds
equally for the product and so does not by itself favour either. The coupling is tested by a permutation null that holds the clustering fixed
and shuffles the per-trajectory complexities, breaking any link between which patterns are complex and
which are common, so that its mean equals the separable product. This null shows the coupling to be
significant under Benjamini--Hochberg correction in several ensembles. It is positive for the
Sargasso ensemble ($+0.10$, corrected $p=0.002$), where the more complex patterns are
also the more prevalent, and negative for the East-Pacific cyclones ($-0.15$, $p<0.001$) and the MI cohort
($-0.20$, $p=0.04$). The GNSS plate-boundary coupling ($-0.09$) does not survive correction ($p=0.09$) and
is reported as suggestive only. The coupling modulates the magnitude
of these contrasts without overturning any featured ranking, so it strengthens the contrasts rather than
being necessary to establish them.

\subsection*{Cross-domain synthesis}
The same two-ingredient decomposition holds across the seven cross-domain systems with no per-system tuning
(Table~\ref{tab:cross}; ESM). Turbulent particle ensembles~\cite{weine2024cmrsim} are far more complex than
laminar ones (CWMMSE $2.05$ against $0.15$). Tropical cyclone tracks~\cite{knapp2010} order East Pacific above North
Atlantic above West Pacific, the most erratic tracks scoring highest, and manoeuvring maritime vessels
are more complex as a system than lane-following cargo and tankers. At a broadband seismometer, system
complexity is highest in the pre-event noise and collapses during the coherent wave train of the 2011
Tohoku earthquake~\cite{beyreuther2010}, which the measure registers as an abrupt drop in system complexity. Three further
domains extend the reach: variable-star light curves~\cite{dau2019ucr} separate pulsation classes,
gait-force recordings~\cite{hausdorff2000} separate neurodegenerative cohorts from healthy controls, and
a continental GNSS network~\cite{blewitt2018ngl} separates active plate boundaries from stable interiors.
Placed in the plane of individual complexity against ensemble diversity (Figure~\ref{fig:plane}), the
eleven systems span its full
extent, from high in both (developed chaos, turbulence), through the diversity-only corner (orbits), to
the complexity-loss cases (diseased cohorts, the earthquake), with CWMMSE largest where both coordinates
are high and near zero whenever either collapses.

\begin{table}[tb]
\centering\footnotesize
\caption{The eleven applications, with the scientific domain and data source of each system. Values are
representative CWMMSE for the stated contrast; full settings, sample sizes and intervals are in the
ESM.}\label{tab:cross}
\begin{tabular}{lllll}
\toprule
System & Domain & Source & Contrast & CWMMSE\\
\midrule
Lorenz/R\"ossler & chaos        & \cite{lorenz1963,rossler1976} & regime               & peaks at chaos\\
Turbulence       & fluids       & \cite{weine2024cmrsim}        & turbulent / laminar  & 2.05 / 0.15\\
Drifters         & ocean        & \cite{elipot2016drifter}      & Sargasso / Gulf St.  & 2.86 / 2.08\\
Cyclones         & atmosphere   & \cite{knapp2010}              & EP / NA / WP         & 2.09 / 1.35 / 0.79\\
AIS vessels      & maritime     & \cite{marinecadastre}         & working / cargo      & 0.61 / 0.24\\
Orbits           & astronautics & \cite{celestrak}              & debris / fleet       & 0.02 / 0.00\\
ECG              & physiology   & \cite{wagner2020ptbxl}        & healthy / MI         & 3.08 / 1.51\\
Seismic          & geophysics   & \cite{beyreuther2010}         & pre / wave train     & 0.34 / 0.03\\
Variable stars   & astronomy    & \cite{dau2019ucr}             & eclipsing / Cepheid  & 0.08 / 0.05\\
Gait             & clinical     & \cite{hausdorff2000}          & ALS / control        & 0.22 / 0.15\\
GNSS             & geodesy      & \cite{blewitt2018ngl}         & boundary / interior  & 0.42 / 0.06\\
\bottomrule
\end{tabular}
\end{table}

\begin{figure}[tb]
\centering
\includegraphics[width=\linewidth]{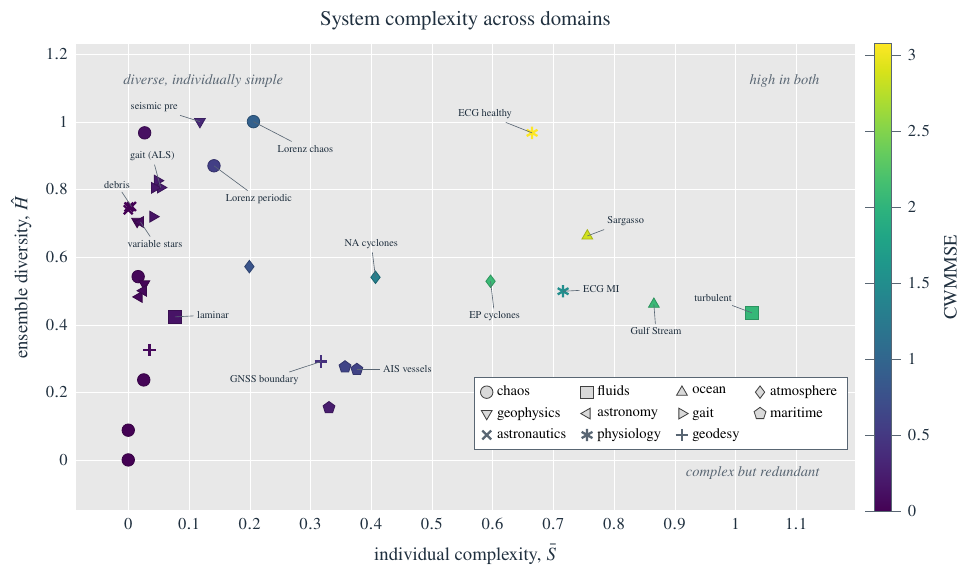}
\caption{All eleven application domains in the complexity plane. Each marker is one group (a dynamical regime, geographic region, clinical cohort, vessel or orbital class, basin or time window): the horizontal axis is individual complexity $\bar S$ (mean within-trajectory MMSE), the vertical axis is ensemble diversity $\hat H=H(C)/\ln M$ (normalised cluster entropy), marker colour encodes CWMMSE on a scale from dark (low CWMMSE) to bright (high CWMMSE), and marker shape encodes the scientific domain (legend). The two ingredients are nearly independent across the corpus, so the groups fill the plane rather than lying on a line. CWMMSE (colour) brightens toward the upper right and is largest for ensembles substantial in both ingredients (the healthy ECG cohort, the turbulent and chaotic ensembles, the Sargasso drifters); it is near zero wherever either ingredient collapses, as for the high-diversity but dynamically predictable orbital-debris and pre-event seismic ensembles at the top left, and as it would be for an individually complex but behaviourally redundant ensemble at the lower right. The plane is comparative rather than absolute, since $\bar S$ and $\hat H$ depend on the per-domain analysis parameters listed in the Electronic Supplementary Material; in particular $\hat H$ depends on the ensemble size $M$, so cross-domain positions are read qualitatively and are not comparable in absolute terms unless $\hat H$ is recomputed at a common $M$.}
\label{fig:plane}
\end{figure}

\subsection*{Robustness}
Because the clustering granularity is a free resolution and the distance carries a scale, we check that
the contrasts are not artefacts of either. Sweeping the relative threshold over a broad range leaves every
contrast's direction unchanged on a wide plateau, and replacing the amplitude-dependent distance by a
scale- and length-invariant normalisation preserves every featured contrast (Sargasso above Gulf Stream,
healthy above MI, the chaotic and basin orderings) with sign probability at least $0.99$. An
amplitude-free shape distance instead drives the clustering to near-triviality and is reported as a
documented sensitivity. The contrasts are equally insensitive to the agglomeration rule, holding under
Ward, average and complete linkage. The cardiac contrast is robust across the tested grid of
sample-entropy tolerance and embedding; the more marginal drifter contrast holds at the default and
larger tolerances but weakens at the smallest, a sensitivity we report rather than tune away (ESM).
Off-the-shelf parameter-free cluster-number rules are inappropriate here: they
select two or three coarse clusters and flip the cardiac contrast, because CWMMSE measures fine-grained
ensemble diversity, not the single dominant split. We therefore treat the resolution as explicit
and report the full sweep (ESM). Across every check, the direction of every featured contrast is
preserved, so the decomposition, not the parameters, drives the results.

\section{Discussion}\label{sec:discussion}

Taken together, these results cohere into a single account. We have introduced CWMMSE, a measure of the complexity of a whole ensemble of trajectories, and shown that it
is exactly the weighted entropy of Belis and Guia\c{s}u~\cite{belis1968,guiasu1971} applied to the population's
distribution over behavioural patterns, with each pattern's dynamical complexity as its utility. The
prediction we set out, that system complexity decomposes into individual complexity and ensemble diversity
and that the two diverge in practice, is confirmed across all eleven systems: a region of ocean can be
individually complex yet collectively redundant, a debris cloud maximally diverse yet individually
trivial, and a diseased cohort can match a healthy one in per-record complexity yet be far less diverse as
a population.

This independence is why CWMMSE captures something its constituents cannot, and why both ingredients are
required. A measure built on either ingredient alone fails where the other dominates, as the cardiac and
orbital controls show (\S\ref{sec:results}): averaged entropy misranks the cohorts, and a diversity index
such as Rao's quadratic entropy~\cite{rao1982,ricottaszeidl2006,leinstercobbold2012} ranks perfectly
predictable orbits as the most complex system studied. This is the central distinction from
functional-diversity theory: those indices weight a population's
spread by pairwise dissimilarities or trait differences, so a maximally spread but dynamically trivial
ensemble is maximally diverse to them; CWMMSE weights instead by intrinsic dynamical complexity, and so
returns near zero. Relatedly, we report CWMMSE as a dimensionless, complexity-weighted entropy rather than as an ordinary Shannon
entropy in nats, and do not convert it to an effective number of patterns~\cite{jost2006}, giving the
normalised cluster entropy as its diversity-only companion. Both the separable product and CWMMSE combine the two ingredients
and pass both controls, but they are not equivalent: equation~\eqref{eq:coupling} expresses their difference
as a population-weighted covariance between each pattern's complexity and its surprisal, which is significant
in several systems, and CWMMSE is the
single information-theoretic object, with the asymptotic meaning of Theorem~\ref{thm:weighted}, that
retains it.

The result connects two strands of theory. The first is weighted entropy~\cite{belis1968,guiasu1971,
shannon1948,cover2006}, of which CWMMSE is the instance whose utility is the outcome's own dynamical
complexity. The second is the loss of complexity with disease and ageing~\cite{goldberger2002,costa2002,
costa2005}, established for individual physiological signals; our cardiac result lifts that principle from
the individual to the cohort, where its signature is a contraction of ensemble diversity. More broadly,
wherever data arrive as a population of trajectories, in ecology, clinical screening, transport and
infrastructure, or space situational awareness, the decomposition offers a vocabulary for distinguishing
systems that are complex because their members are intricate from systems that are complex because their
members are varied.

We recommend a three-part practice. We fix the clustering resolution within the robust plateau and
report it; we present CWMMSE alongside its two ingredients, so that a contrast can be read as driven by
individual complexity, by ensemble diversity, or by both; and we place each system in the
complexity-diversity plane (Figure~\ref{fig:plane}), where qualitatively different regimes occupy distinct
corners. Reported this way, CWMMSE is a comparative instrument that
ranks ensembles within a study and diagnoses which ingredient drives a difference, rather than returning
an absolute complexity on a universal scale; this is by design, and is shared with the diversity indices
and effect-size measures already standard in the fields it targets. The accompanying open implementation makes this routine: a
single function returns CWMMSE with its two ingredients and a subsample interval for any ensemble supplied
as a list of multivariate trajectories.

Three aspects define the scope of the measure and mark its natural extensions. First, the clustering
resolution is a genuine parameter of the method rather than a nuisance: we report it explicitly and show
that every featured contrast is stable across a broad plateau (\S\ref{sec:results}). The parameter-free
cluster-number rules we tested select the single dominant split, which does not track the fine-grained
diversity CWMMSE is designed to measure, so we set the resolution deliberately and report the full sweep.
The resolution fixes the measure to within-study comparison, as does the ensemble-size dependence of the
normalised cluster entropy $\hat H=H(C)/\ln M$, which for a fixed finite partition tends to zero as $M$
grows; we report the ensemble size and a bias-corrected cluster entropy throughout (ESM), and a principled,
data-driven rule for the resolution would extend it to comparison across studies. Second, the within-trajectory weight
is the sample entropy, and so measures irregularity, inheriting its sensitivities to the tolerance and
embedding and a dependence on the trajectory dissimilarity, whose amplitude dependence we bound by the
scale-invariant check. Because sample entropy rates unstructured noise as maximally irregular, a
spectrum-matched noise ensemble scores above deterministic chaos (\S\ref{sec:results}); CWMMSE therefore
ranks systems of comparable stochasticity rather than separating structure from noise. A per-trajectory
weight that explicitly discounts unstructured noise, in place of the sample entropy used here, would
sharpen this separation. Third, the eleven applications are deliberately broad rather than
exhaustive: their purpose is to show that one decomposition recurs across physically unrelated systems
without per-system tuning, and turning any single case, most naturally the physiological one, into a
definitive study is an inviting direction that would require larger cohorts and comparison with established
markers.

These directions form a clear agenda. An axiomatic characterisation of the weighted-entropy form,
following the axiomatic treatment of diversity measures~\cite{leinster2021}, would establish in what sense
it is the unique combination of the two ingredients; a data-driven rule for the resolution would remove the
last free parameter; and a uniform multiscale and multivariate treatment across domains would sharpen the
comparisons further.

\section{Conclusions}\label{sec:conclusions}

Clustering-weighted multivariate multiscale sample entropy measures the complexity of a population of
trajectories, rather than of any single member. It is a weighted entropy of the population's distribution
over behavioural patterns, in which each pattern is weighted by its own dynamical complexity. In this form
it separates two properties that a simple average conflates: how irregular the members are, and how varied
the population is. Across eleven systems spanning the physical, environmental, engineering and biomedical
sciences, these two properties vary largely independently, so that individual and collective complexity can
point in opposite directions. A measure sensitive to only one of them is therefore misled, whereas a measure
that carries both, together with the coupling between them, tracks the complexity of the system as a whole.
The broader lesson is that the complexity of an ensemble is not the average of its parts: read as a system
in its own right, a population reveals structure, such as the contraction of a cohort's diversity with
disease, that averaging conceals. As observations across the sciences increasingly arrive as populations of
trajectories, measuring the complexity of the whole, rather than averaging the complexity of its members, is
both feasible and, we argue, necessary.

\enlargethispage{12pt}

\ethics{This study analyses only previously published, openly available datasets. The electrocardiogram
data (PTB-XL) and the neurodegenerative-disease gait recordings are de-identified and were collected under
the original studies' ethical approvals and informed
consent~\cite{wagner2020ptbxl,goldberger2000,hausdorff2000}. No new experiments on humans or animals were
performed.}

\dataccess{All code, with one-command reproduction, is openly available at
\url{https://github.com/cisgroup/cwmmse} (DOI and Zenodo will be added on acceptance). Supporting
analyses, per-dataset parameters and full results are in the electronic supplementary material (ESM).
Every dataset is openly available: the NOAA Global
Drifter Program~\cite{elipot2016drifter}, NOAA IBTrACS~\cite{knapp2010}, Marine
Cadastre AIS~\cite{marinecadastre}, CelesTrak two-line elements~\cite{celestrak}, PhysioNet
PTB-XL~\cite{wagner2020ptbxl,goldberger2000}, IRIS and EarthScope FDSN~\cite{beyreuther2010}, the UCR
StarLightCurves archive~\cite{dau2019ucr}, the PhysioNet gait-in-neurodegenerative-disease
database~\cite{hausdorff2000}, and the Nevada Geodetic Laboratory GNSS series~\cite{blewitt2018ngl}.
Per-dataset sources, access dates and parameters are tabulated
in the electronic supplementary material.}

\aucontribute{C.T.: conceptualisation, software, investigation, writing of the original draft. J.H.: conceptualisation, methodology, supervision, writing, review and editing. Both authors gave final approval for publication.}

\aiuse{During the preparation of this manuscript, the authors used Claude (Anthropic, Opus 4.8) to check spelling and grammar and to correct typographical errors. The tool was not used to generate scientific content, data, results, or interpretations. After using this tool, the authors reviewed and edited the text as needed and take full responsibility for the content of the publication.}

\competing{We declare we have no competing interests.}

\funding{This work has been supported by a grant from the Fund for Energy Research with Corporate Partners administered by the Andlinger Center for Energy and the Environment at Princeton University.}

\ack{We thank the data providers listed under Data accessibility.}

\printbibliography
\end{refsection}

\clearpage
\setcounter{section}{0}\renewcommand{\thesection}{S\arabic{section}}
\setcounter{figure}{0}\renewcommand{\thefigure}{S\arabic{figure}}
\setcounter{table}{0}\renewcommand{\thetable}{S\arabic{table}}
\setcounter{equation}{0}\renewcommand{\theequation}{S\arabic{equation}}

\begin{center}
  {\LARGE\bfseries Supplementary Information}\\[1.5ex]
  {\large Quantifying the complexity of trajectory ensembles with clustering-weighted multivariate multiscale sample entropy}
\end{center}
\bigskip

\begin{refsection}

This document supports the main article. Throughout, CWMMSE denotes the clustering-weighted multivariate
multiscale sample entropy of an ensemble, $S_{\mathrm{CWMMSE}}=\sum_i(-p_i\ln p_i)S_i$, which combines
each pattern's within-trajectory complexity $S_i$ with its share $-p_i\ln p_i$ of the population's
diversity. Each section below supplies the full evidence behind a claim of the main text: it collects the
symbols used throughout (\S\ref{s:notation}), documents the
full multivariate multiscale sample entropy (MMSE) algorithm and parameters (\S\ref{s:alg}), a proof of
the consistency theorem (\S\ref{s:proof}),
the functional family and the justification of the weighting (\S\ref{s:family}), the full
electrocardiogram (ECG) benchmark (\S\ref{s:ecg}), the clustering-distance and resolution robustness
(\S\ref{s:robust}), the surrogate and null tests (\S\ref{s:surr}), the eleven systems in detail with
illustrative figures (\S\ref{s:systems}), and data and code availability (\S\ref{s:data}). All numbers
are regenerated from open data by the released software package. Reported intervals are $2.5$--$97.5$
percentile subsample stability ranges (size $0.8M$, sampled without replacement); they understate the true sampling
variance, so we report them as stability ranges rather than confidence intervals.

\section{Notation}\label{s:notation}

Table~\ref{tab:notation} collects the symbols used in the main text and this supplement. Trajectories are
indexed by $j$, clusters (behavioural patterns) by $i$, and channels within a trajectory by $a$.

\begin{table}[h]
\centering\small
\caption{Symbols used in the main text and the electronic supplementary material.}\label{tab:notation}
\begin{tabular}{ll}
\toprule
Symbol & Meaning\\
\midrule
\multicolumn{2}{l}{\emph{Ensemble and trajectories}}\\
$\mathcal{D}=\{T_1,\dots,T_M\}$ & ensemble of $M$ trajectories\\
$M$ & number of trajectories (ensemble size)\\
$T_j$ & trajectory $j$ \ ($j=1,\dots,M$)\\
$N_j$ & length (number of samples) of trajectory $j$\\
$c$ & number of channels per trajectory\\
$a$ & channel index \ ($a=1,\dots,c$)\\
$x^{(a)}_l$ & sample $l$ of channel $a$\\
$\delta_{jj'}$ & dissimilarity between trajectories $j$ and $j'$\\
\midrule
\multicolumn{2}{l}{\emph{Within-trajectory complexity (MMSE)}}\\
$s$ & temporal scale\\
$y^{(a),s}_t$ & coarse-grained channel $a$ at scale $s$ (time index $t$)\\
$\mathbf{m}=(m_1,\dots,m_c)$ & per-channel embedding dimensions\\
$\boldsymbol{\tau}=(\tau_1,\dots,\tau_c)$ & per-channel embedding delays\\
$d=\sum_a m_a$ & composite embedding dimension\\
$r$ & matching tolerance (Chebyshev metric)\\
$V_d$ & composite delay vector of dimension $d$\\
$B_d(r)$ & fraction of vector pairs within $r$ at dimension $d$\\
$S^{(s)}(T_j)$ & scale-$s$ sample entropy of trajectory $j$\\
$S(T_j)$ & within-trajectory complexity (single-scale or scale-summed)\\
\midrule
\multicolumn{2}{l}{\emph{Clustering and diversity}}\\
$C_i$ & cluster (behavioural pattern) $i$\\
$k$ & number of clusters\\
$i$ & cluster index \ ($i=1,\dots,k$)\\
$p_i=|C_i|/M$ & proportion of trajectories in cluster $i$\\
$S_i$ & mean within-trajectory complexity of cluster $i$\\
$h_i=-p_i\ln p_i$ & surprisal weight of cluster $i$\\
$H(C)=-\sum_i p_i\ln p_i$ & Shannon cluster entropy\\
$\hat H=H(C)/\ln M$ & normalised cluster entropy (ensemble diversity)\\
$\hat p_i,\ \hat S_i$ & empirical estimates of $p_i,\ S_i$\\
$M_i=M\hat p_i$ & membership count of cluster $i$\\
\midrule
\multicolumn{2}{l}{\emph{Ensemble functionals}}\\
$\bar S=\sum_i p_i S_i$ & mean MMSE (individual complexity)\\
$S_{\mathrm{CWMMSE}}=\sum_i(-p_i\ln p_i)\,S_i$ & clustering-weighted MMSE\\
$H^w=-\sum_i w_i\,p_i\ln p_i$ & weighted entropy ($w_i$, per-outcome utility)\\
$Q=\tfrac{1}{M^2}\sum_{j,j'}\delta_{jj'}$ & Rao's quadratic entropy (uniform-abundance form)\\
\bottomrule
\end{tabular}
\end{table}

\section{The MMSE algorithm and parameters}\label{s:alg}

The within-trajectory complexity is the multivariate multiscale sample entropy~\cite{ahmed2011,costa2002}
built on sample entropy~\cite{richman2000}. For a multivariate trajectory the channels are coarse-grained
at scale $s$ (main text equation~(2.1)), per-channel standardised, embedded into composite delay vectors of
dimension $d=\sum_a m_a$, and matched under the Chebyshev metric within a tolerance $r$; the sample entropy
is the negative log ratio of neighbour counts at successive embeddings, both counted over the template
indices admissible at the larger embedding so the ratio lies in $[0,1]$ (Algorithm~\ref{alg:mmse}). We
count neighbours with a $k$-d tree, giving roughly order $N\log N$ time at the low embedding dimensions
used here (composite dimension $d=\sum_a m_a\le 6$) rather than the order $N^2$ of an explicit distance
matrix. The extended step lengthens every channel's embedding by one sample at once, raising the composite
dimension from $d$ to $d+c$ (an increase of $c$, not $1$); this multivariate convention follows our released
implementation, coincides with the construction of Ahmed and Mandic~\cite{ahmed2011} for a single channel,
and for $c>1$ is a distinct construction (a joint $c$-coordinate extension whose embedding grows with $c$)
that we apply identically to every system. When a neighbour count vanishes the estimator is degenerate: if
$B_d(r)=0$ no base-embedding match exists and the ratio $B_{d+c}(r)/B_d(r)$ is undefined, and if
$B_d(r)>0$ but $B_{d+c}(r)=0$ the sample entropy diverges; in either case the affected trajectory is
excluded at that scale. Per-application parameters are in Table~\ref{tab:params}.

\begin{algorithm}[t]
\caption{Multivariate multiscale sample entropy $S^{(s)}$ of one trajectory.}\label{alg:mmse}
\begin{algorithmic}[1]
\Require channels $x^{(1)},\dots,x^{(c)}$ of length $N$; scale $s$; dimensions $\mathbf{m}=(m_1,\dots,m_c)$; per-channel delays $\boldsymbol{\tau}=(\tau_1,\dots,\tau_c)$; tolerance $r$
\For{$a = 1$ to $c$}
  \State $y^{(a)} \gets$ non-overlapping block means of $x^{(a)}$ over windows of length $s$ \Comment{coarse-grain}
  \State $y^{(a)} \gets (y^{(a)} - \mathrm{mean}(y^{(a)}))/\mathrm{std}(y^{(a)})$ \Comment{per-channel standardise}
\EndFor
\State build composite delay vectors $V_d$ of dimension $d=\sum_a m_a$ from $\{y^{(a)}\}$ with delay $\tau$
\State build $V_{d+c}$ by extending \emph{every} channel's embedding by one further delayed sample (composite dimension $d+c$)
\State $B_d(r) \gets$ fraction of distinct pairs of $V_d$ within Chebyshev distance $r$ \Comment{$k$-d tree, $O(N\log N)$}
\State $B_{d+c}(r) \gets$ fraction of distinct pairs of $V_{d+c}$ within $r$, over the same start indices
\State \Return $-\ln\!\big(B_{d+c}(r)/B_d(r)\big)$
\end{algorithmic}
\end{algorithm}

\begin{table}[t]
\centering\small
\caption{Per-application parameters: embedding $m$, delay $\tau$, tolerance $r$, window length $N$ (in
samples; physical units in parentheses), clustering threshold (relative fraction of the maximum linkage
distance, or absolute), and ensemble size $M$ (number of trajectories or segments). The chaos ensembles use the canonical chaotic
settings (Lorenz $\rho=28$; R\"ossler parameters $(0.2,0.2,5.7)$).}\label{tab:params}
\begin{tabular}{lcccccl}
\toprule
System & $m$ & $\tau$ & $r$ & window $N$ & threshold & $M$\\
\midrule
Chaos (Lorenz/R\"ossler) & $(2,2,2)$ & $(1,1,1)$ & 0.2  & 5000 & abs.\ 10000 & 100 per regime\\
Turbulence               & $(1,1,1)$ & $(1,1,1)$ & 0.15 & 200  & 0.3 & 100 per group ($\times250$)\\
Ocean drifters           & $(2,2)$   & $(1,1)$   & 0.15 & 120 (h) & 0.3 & 193, 252\\
Tropical cyclones        & $(2,2)$   & $(1,1)$   & 0.15 & 24 ($\times3$ h) & 0.3 & 757, 1193, 1694\\
Maritime AIS             & $(2,2)$   & $(1,1)$   & 0.15 & 20 ($\times1$ min) & 0.3 & 163, 436, 394\\
Orbits/debris            & $(2,2,2)$ & $(1,1,1)$ & 0.2  & 90 ($\times60$ s) & 0.3 & 200 per group\\
ECG cohorts              & $(2,2,2)$ & $(1,1,1)$ & 0.15 & 1000 (10 s) & 0.3 & 120 per cohort\\
Seismic (Tohoku)         & $(2,2,2)$ & $(1,1,1)$ & 0.15 & 1200 (60 s) & 0.3 & 18, 60, 108\\
Variable stars           & $(2)$     & $(1)$     & 0.2  & 1024 & 0.3 & 250 per class\\
Gait (neurodeg.)         & $(2,2)$   & $(1,1)$   & 0.15 & 1500 (5 s) & 0.3 & 156--240\\
GNSS (crustal motion)    & $(2,2,2)$ & $(1,1,1)$ & 0.15 & 500 (d) & 0.3 & 240\\
\bottomrule
\end{tabular}
\end{table}

\section{Proof of the consistency theorem}\label{s:proof}

We restate Theorem~2.1 of the main text. Fix a pattern resolution, so that a random trajectory takes
values in a finite set of patterns $i=1,\dots,k$ with fixed probabilities $p_i>0$ summing to one, and,
conditional on the pattern, let the within-trajectory complexities be independent and identically
distributed with finite mean $S_i=\mathbb{E}[\,\cdot\mid i\,]<\infty$. Draw $M$ trajectories independently. Let $\hat p_i$ be the
empirical frequency of pattern $i$ and $\hat S_i$ the sample mean of the complexities of the drawn members
of pattern $i$. Then
\[
\sum_{i=1}^{k}\big(-\hat p_i\ln\hat p_i\big)\hat S_i \;\xrightarrow{\text{a.s.}}\;
-\sum_{i=1}^{k} p_i\ln p_i\,S_i, \qquad M\to\infty .
\]

\emph{Proof.} The pattern counts $(M\hat p_1,\dots,M\hat p_k)$ are multinomial$(M;p_1,\dots,p_k)$, so by
the strong law of large numbers $\hat p_i\to p_i$ almost surely for each $i$. Because every $p_i>0$, the
membership count $M_i=M\hat p_i\to\infty$ almost surely, so the prefix averages of the i.i.d.\ within-pattern
complexities over this random but almost-surely divergent index converge, $\hat S_i\to S_i$ almost surely, by
the strong law of large numbers. These finitely many almost-sure statements hold simultaneously on a common
probability space, so $(\hat{\mathbf p},\hat{\mathbf S})\to(\mathbf p,\mathbf S)$ jointly almost surely. The
map $(\mathbf{p},\mathbf{S})\mapsto\sum_i(-p_i\ln p_i)S_i$ is continuous at the limit point (all $p_i>0$,
each $S_i$ finite), with $-p\ln p$ extended continuously to $p=0$ under the convention $0\ln 0=0$. By the
continuous mapping theorem the empirical functional converges almost surely to its population value. For $S_i\equiv1$ the limit is
$-\sum_i p_i\ln p_i$, the Shannon entropy, recovering the consistency of the empirical Shannon entropy.
\hfill$\square$

\noindent The theorem fixes the pattern resolution, treating the pattern assignment as a fixed measurable
map with fixed probabilities $p_i$; it therefore does not establish consistency of the sample-dependent
agglomerative (Ward) clustering used in practice, whose partition is estimated from the same sample. Under
a fixed cut height the number of patterns grows
with $M$ and no fixed $p_i>0$ exists, a regime we do not treat analytically and characterise only
empirically. The identical argument applies to the separable product $H(C)\bar S$, which is therefore
equally consistent, so the theorem does not distinguish the two functionals. Finally, each $\hat S_i$ is
itself a finite-length sample-entropy estimate, biased and variable at small $N$; the theorem concerns the
$M\to\infty$ limit at fixed estimator, and finite-$M$ behaviour is characterised empirically by the
subsampling intervals reported throughout.

\section{The functional family and the weighting}\label{s:family}

Write $h_i=-p_i\ln p_i$, $H(C)=\sum_i h_i$, $\bar S=\sum_i p_i S_i$, and let $\delta_{jj'}$ denote the
dissimilarity between trajectories $j$ and $j'$. Five functionals combine the two
ingredients in different ways (Table~\ref{tab:family}). CWMMSE is the weighted entropy $\sum_i h_i S_i$.

\begin{table}[t]
\centering\small
\caption{The functional family and what each captures.}\label{tab:family}
\begin{tabular}{lll}
\toprule
Functional & Formula & Captures\\
\midrule
mean MMSE $\bar S$        & $\sum_i p_i S_i$        & within-trajectory complexity only\\
cluster entropy $H(C)$    & $-\sum_i p_i\ln p_i$    & between-trajectory diversity only\\
Rao $Q$                   & $\tfrac{1}{M^2}\sum_{j,j'=1}^{M} \delta_{jj'}$ & pairwise dissimilarity diversity only\\
product $H(C)\,\bar S$    & $H(C)\,\bar S$          & both, separably\\
CWMMSE                    & $\sum_i (-p_i\ln p_i)\,S_i$ & both, jointly (weighted entropy)\\
\bottomrule
\end{tabular}
\end{table}

Three properties single out the weighted-entropy form. (i) Two degenerate limits: a single dominant
pattern ($p_1\to1$) sends $h_1\to0$ and CWMMSE$\to0$ however complex that pattern; uniformly trivial
patterns ($S_i\to0$) send CWMMSE$\to0$ however diverse the population. The mean violates the first and the
cluster entropy the second. (ii) Uniform-complexity reduction: if $S_i\equiv S$ then
CWMMSE$=S\,H(C)$, so for $k$ equal clusters CWMMSE$=S\ln k$. (iii) Coupling: in general
CWMMSE$-H(C)\bar S=\sum_i h_i(S_i-\bar S)=\operatorname{Cov}_p(S_i,-\ln p_i)$, the population covariance of complexity and surprisal that the separable product discards
(\S\ref{s:surr}).

A hand-built test (Table~\ref{tab:handbuilt}) makes the point: the mean cannot tell a redundant-complex
ensemble from a balanced one, the cluster entropy cannot tell a diverse-simple one from a balanced one,
and only the two-ingredient functionals (the separable product and CWMMSE, which coincide here because
within-cluster complexity is homogeneous) single out the ensemble that is high in both. The convex normalisation
$\mathrm{CWMMSE}/H(C)$ collapses back towards $\bar S$ and is rejected.

\begin{table}[t]
\centering\small
\caption{Hand-built ensembles of 50 trajectories. Only CWMMSE singles out the balanced ensemble that is
both diverse and individually complex.}\label{tab:handbuilt}
\begin{tabular}{lrrrr}
\toprule
Ensemble & $\bar S$ & $H(C)$ & $H(C)\bar S$ & CWMMSE\\
\midrule
diverse-simple (10 clusters, $S=0.5$)    & 0.50 & 2.30 & 1.15 & 1.15\\
redundant-complex (2 clusters, $S=2.0$)  & 2.00 & 0.69 & 1.39 & 1.39\\
balanced (10 clusters, $S=2.0$)          & 2.00 & 2.30 & 4.61 & \textbf{4.61}\\
\bottomrule
\end{tabular}
\end{table}

Rao's quadratic entropy~\cite{rao1982,ricottaszeidl2006,leinstercobbold2012} is the natural prior-art
competitor: a diversity weighted by pairwise dissimilarities. It is correct on neither of the paper's two controls, the
healthy-versus-infarction ECG contrast (\S\ref{s:ecg}) and the orbital null, an ensemble of geometrically
diverse but individually predictable orbits (\S\ref{s:robust}); a maximally spread but
dynamically trivial ensemble is maximally diverse to it. CWMMSE instead weights by intrinsic dynamical
complexity. Weighted entropy was introduced by Belis and Guia\c{s}u~\cite{belis1968,guiasu1971} and used
to group data into classes by Guia\c{s}u~\cite{guiasu1986}; the within-series weighted permutation
entropy~\cite{fadlallah2013} is its single-trajectory analogue.

\section{ECG benchmark in full}\label{s:ecg}

We draw $400$ size-50 subsamples without replacement from each of the 120-record NORM and MI cohorts
(PTB-XL, leads I, II, V2~\cite{wagner2020ptbxl,goldberger2000}), cluster at the stated resolution, and
record each functional; the area under the curve (AUC) is the probability the functional ranks a healthy subsample
above a diseased one (0.5 is chance, below 0.5 is the wrong direction). Table~\ref{tab:ecgsweep} gives the
sweep. Averaged MMSE misranks the cohorts at every resolution; Rao $Q$ misranks them more strongly;
cluster entropy, the product and CWMMSE separate them correctly for resolution $\ge0.3$. A paired bootstrap test of
the separation $|\mathrm{AUC}-0.5|$ at resolution 0.3, resampling the per-subsample AUC realisations and
Benjamini--Hochberg corrected across the four comparisons, finds CWMMSE significantly more separating than
the mean ($p<0.001$); it is statistically distinguishable from the product, but the difference is
negligible (a separation gain of $0.005$ in AUC, $p=0.015$ corrected). The case for CWMMSE over the
product therefore rests on the orbital control (\S\ref{s:robust}) and the coupling (\S\ref{s:surr}), not on
the cardiac AUC.

\begin{table}[t]
\centering\small
\caption{ECG cohort-separation AUC and effect size $|d|$ at three clustering resolutions, for five
functionals. Below 0.5 is the wrong direction.}\label{tab:ecgsweep}
\begin{tabular}{llrr}
\toprule
resolution & functional & AUC & $|d|$\\
\midrule
0.2 & mean MMSE & 0.24 & 1.00\\
0.2 & cluster entropy & 0.85 & 1.33\\
0.2 & Rao $Q$ & 0.00 & 3.37\\
0.2 & product & 0.42 & 0.27\\
0.2 & CWMMSE & 0.43 & 0.24\\
\midrule
0.3 & mean MMSE & 0.23 & 1.06\\
0.3 & cluster entropy & 0.995 & 2.88\\
0.3 & Rao $Q$ & 0.004 & 3.65\\
0.3 & product & 0.91 & 2.09\\
0.3 & CWMMSE & 0.92 & 2.14\\
\midrule
0.4 & mean MMSE & 0.21 & 1.14\\
0.4 & cluster entropy & 0.997 & 4.12\\
0.4 & Rao $Q$ & 0.01 & 3.45\\
0.4 & product & 0.99 & 3.46\\
0.4 & CWMMSE & 0.99 & 3.64\\
\bottomrule
\end{tabular}
\end{table}

Figure~\ref{fig:ecgrobust} shows this sweep visually, together with the orbital null control, an ensemble of
geometrically diverse but individually predictable orbits, that the
main-text controls scorecard summarises.

\begin{figure}[htbp]
\centering
\includegraphics[width=\textwidth]{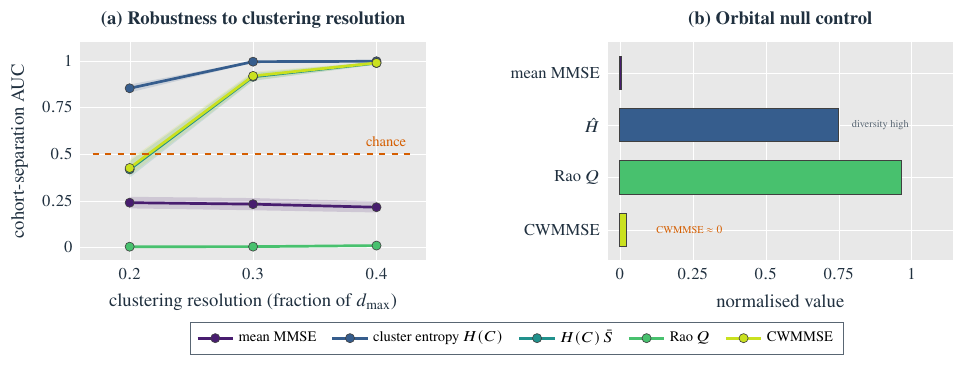}
\caption{Detail behind the controls scorecard of the main article. (a) Cohort-separation AUC (area under the ROC curve) against the clustering resolution (fraction of the maximum linkage distance) for the five functionals, with $95\%$ intervals and the chance value $0.5$ (dashed): the cluster entropy, the separable product and CWMMSE separate the cohorts on a broad plateau (resolution $\ge0.3$), whereas the mean MMSE and Rao $Q$ stay below chance throughout, so the cardiac ordering is not an artefact of one threshold. (b) The orbital null control on the breakup-debris ensemble, all values on a common $[0,1]$ scale: the diversity-only measures (normalised cluster entropy $\hat H$ and a normalised Rao $Q$) are large, whereas the mean MMSE and CWMMSE are near zero, because individual orbital motion is predictable.}\label{fig:ecgrobust}
\end{figure}

\section{Distance conventions and resolution robustness}\label{s:robust}

The clustering distance carries a scale and the cut height is a resolution; we verify the contrasts
against both. Table~\ref{tab:distance} and Figure~\ref{fig:esmdistance} compare three distance
conventions. The raw summed-Euclidean distance reproduces the main numbers; the scale convention
standardises each channel across the ensemble and averages over time, removing units and length while
preserving relative amplitude; the shape convention standardises each trajectory by its own statistics,
removing amplitude entirely. Every featured contrast holds under raw and scale (Table~\ref{tab:contrasts})
with sign probability $\ge0.99$. The shape convention over-resolves the clustering (almost
every short velocity window becomes its own pattern) and is reported as a documented sensitivity rather
than a recommended setting.

\begin{table}[t]
\centering\small
\caption{CWMMSE under the three distance conventions for featured groups. Raw and scale agree on every
ordering; shape over-resolves.}\label{tab:distance}
\begin{tabular}{lrrr}
\toprule
Group & raw & scale & shape\\
\midrule
Drifters Gulf Stream   & 2.08 & 2.19 & 4.43\\
Drifters Sargasso      & 2.86 & 2.86 & 3.96\\
ECG healthy            & 3.08 & 3.11 & 3.18\\
ECG MI                 & 1.51 & 2.32 & 3.43\\
Orbits Fengyun debris  & 0.02 & 0.02 & 0.02\\
Cyclones EP            & 2.09 & 1.95 & 4.06\\
Cyclones WP            & 0.79 & 0.80 & 1.44\\
GNSS boundary          & 0.42 & 0.97 & 1.74\\
Gait ALS               & 0.22 & 0.22 & 0.24\\
Stars eclipsing        & 0.08 & 0.08 & 0.08\\
\bottomrule
\end{tabular}
\end{table}

\begin{table}[t]
\centering\small
\caption{Contrast sign-probabilities from subsampling under the raw and scale conventions: the fraction
of subsamples in which the contrast holds in the stated direction, with the $95\%$ interval of the
difference in CWMMSE.}\label{tab:contrasts}
\begin{tabular}{lrlrl}
\toprule
Contrast & \multicolumn{2}{c}{raw} & \multicolumn{2}{c}{scale}\\
 & $p$ & diff interval & $p$ & diff interval\\
\midrule
Sargasso $>$ Gulf Stream      & 0.999 & [0.34, 1.09] & 0.996 & [0.19, 0.96]\\
ECG healthy $>$ MI            & 1.000 & [0.52, 1.75] & 0.989 & [0.15, 1.12]\\
Orbits debris $>$ fleet       & 1.000 & [0.02, 0.02] & 1.000 & [0.02, 0.02]\\
Seismic pre $>$ wave train    & 1.000 & [0.26, 0.29] & 1.000 & [0.26, 0.29]\\
GNSS boundary $>$ interior    & 1.000 & [0.27, 0.44] & 1.000 & [0.67, 1.03]\\
Stars eclipsing $>$ Cepheid   & 1.000 & [0.03, 0.04] & 1.000 & [0.03, 0.04]\\
Gait ALS $>$ control          & 1.000 & [0.02, 0.09] & 1.000 & [0.02, 0.09]\\
\bottomrule
\end{tabular}
\end{table}

\begin{figure}[htbp]
\centering
\includegraphics[width=\textwidth]{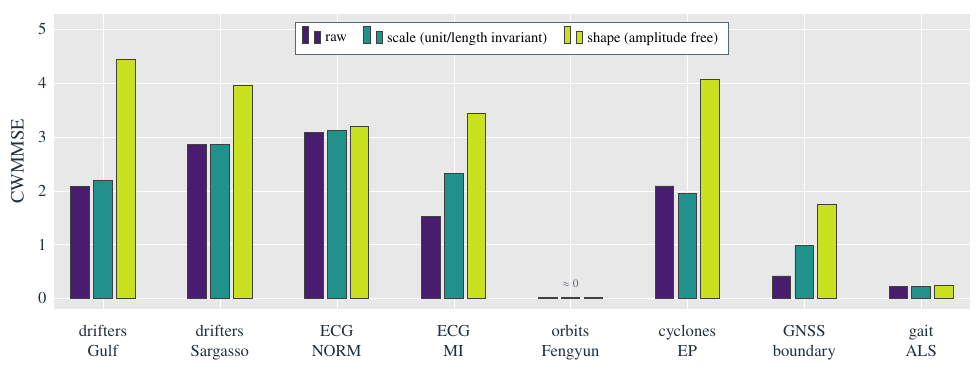}
\caption{CWMMSE is robust to the trajectory-distance convention used for clustering. Bars give CWMMSE for the eight featured groups (horizontal axis; the adjacent Table~\ref{tab:distance} lists two further groups), with three bars per group: \emph{raw} (summed point-wise Euclidean distance, the default used throughout), \emph{scale} (per-channel ensemble standardisation, removing units and overall amplitude), and \emph{shape} (per-trajectory standardisation, amplitude free). The raw and scale conventions give nearly identical CWMMSE and agree on every within-system ordering, so the reported contrasts are not an artefact of channel units or amplitude. The amplitude-free shape convention inflates CWMMSE for the short velocity-window ensembles (drifters, MI ECG, cyclones and GNSS), because discarding amplitude lets noise dominate the clustering and over-resolves the ensemble. It is included as a stress test and is not used for the results.}\label{fig:esmdistance}
\end{figure}

We check three further sources of variation. Sweeping the relative threshold over $[0.15,0.60]$ leaves
every contrast's direction unchanged on a broad plateau; the default 0.3 sits inside it. Parameter-free
cluster-number rules (silhouette, largest dendrogram gap) select two or three coarse clusters and flip the
cardiac contrast, because CWMMSE measures fine-grained diversity rather than the single dominant split.
The plug-in cluster entropy carries a downward finite-$M$ bias of order $(k-1)/(2M)$; we report the
Miller-Madow-corrected value $H_{\mathrm{mm}}=H(C)+(k-1)/(2M)$ alongside $H(C)$ in the released tables, and
it does not change any ordering.

The orbital null is the most discriminating control on diversity-only measures. The orbital ensembles reach the
highest normalised cluster entropy ($\hat H\approx0.75$) and the largest Rao $Q$ of any system, yet
CWMMSE$\approx0$ because individual orbital motion is predictable.

\paragraph{Linkage and entropy parameters.} The featured contrasts are insensitive to the agglomeration
rule~\cite{hennig2007}: under Ward, average and complete linkage the Sargasso, cardiac and orbital contrasts all hold in
the stated direction (Table~\ref{tab:sens}). They are likewise robust to the sample-entropy tolerance $r$
and embedding $m$ for the cardiac contrast, which holds for every combination of
$m\in\{(2,2,2),(3,3,3)\}$ and $r\in\{0.10,0.15,0.20\}$. The more marginal drifter contrast holds at the
default and larger tolerances (for example $r\ge0.15$ at $m=(2,2)$) but weakens at the smallest tolerance
and largest embedding, where the per-trajectory entropies, and hence the diversity structure, shift; we
report this sensitivity rather than tune it away.

\begin{table}[t]
\centering\small
\caption{Robustness of the featured contrasts (does CWMMSE rank the first group above the second?) to the
linkage rule, at the default resolution and raw distance.}\label{tab:sens}
\begin{tabular}{lccc}
\toprule
Linkage & Sargasso $>$ Gulf & healthy $>$ MI & debris $>$ fleet\\
\midrule
Ward     & yes & yes & yes\\
average  & yes & yes & yes\\
complete & yes & yes & yes\\
\bottomrule
\end{tabular}
\end{table}

\section{Surrogate and null tests}\label{s:surr}

Two nulls show CWMMSE responds to genuine structure. First, iterative amplitude-adjusted Fourier-transform
(IAAFT) surrogates~\cite{theiler1992,schreiber1996}, randomised control series matched to the data in
spectrum and amplitude but stripped of nonlinear structure, preserve each channel's power spectrum and
amplitude distribution while randomising phases. For a 60-trajectory subsample of the chaotic Lorenz
ensemble the real CWMMSE is 0.84 against a surrogate mean of 8.3 across $999$ surrogate ensembles, below
every one of them (one-sided $p=0.001$). The real value lies far below the surrogate band because deterministic motion is far more predictable per trajectory than
spectrum-matched noise; that the noise scores higher reflects the white-noise sensitivity of sample entropy
noted in the main text. CWMMSE is therefore not a function of the linear spectrum: the same separation is highly significant
for the chaotic R\"ossler and the electrocardiogram cohorts, which also fall below their surrogate bands
(each one-sided $p=0.001$; Figure~\ref{fig:esmsurr}a). The ocean drifters overlap their band and are not significant, consistent with their
velocity complexity being closer to linear. Second,
a complexity-permutation null holds the clustering fixed and permutes the per-trajectory complexities
across members; under the permutation each cluster's expected mean complexity is the grand mean $\bar S$,
so the null expectation of CWMMSE is $\sum_i h_i\bar S=H(C)\bar S$, the separable product, and a departure
of the real value from this null measures the coupling of equation~(2.5). Table~\ref{tab:coupling} shows the
coupling is significant in several systems: positive for the Sargasso ensemble, where the more complex
paths are also the more prevalent, and negative for the MI cohort, East-Pacific cyclones and the GNSS plate
boundary, where the most variable members are the least prevalent. It is negligible where cluster
complexities are homogeneous (deterministic chaos, the healthy ECG cohort, and the variable-star and gait
ensembles), for which CWMMSE and the separable product nearly coincide. This is the empirical case for the single weighted-entropy
functional over the separable product. Figure~\ref{fig:esmsurr} summarises both nulls.

\begin{table}[t]
\centering\small
\caption{Complexity-permutation null: real CWMMSE, the null mean (equal to the separable product), the
coupling (real minus null), its nominal two-sided $p$-value and the Benjamini--Hochberg-corrected value
$p_{\mathrm{BH}}$ (correction across all tested groups). Columns are rounded independently, so the displayed
coupling may differ from the difference of the rounded columns by one in the last digit. The GNSS coupling
does not survive correction.}\label{tab:coupling}
\begin{tabular}{lrrrrr}
\toprule
Group & CWMMSE & null mean & coupling & $p$ & $p_{\mathrm{BH}}$\\
\midrule
Drifters Sargasso     & 2.86 & 2.77 & $+0.10$ & 0.001 & 0.002\\
Drifters Gulf Stream  & 2.08 & 2.10 & $-0.02$ & 0.71 & 1.00\\
ECG healthy           & 3.08 & 3.08 & $-0.00$ & 0.90 & 1.00\\
ECG MI                & 1.51 & 1.71 & $-0.20$ & 0.01 & 0.04\\
Cyclones EP           & 2.09 & 2.23 & $-0.15$ & $<0.001$ & $<0.001$\\
GNSS plate boundary   & 0.42 & 0.50 & $-0.09$ & 0.04 & 0.09\\
Lorenz chaotic        & 0.95 & 0.95 & $+0.00$ & 1.00 & 1.00\\
\bottomrule
\end{tabular}
\end{table}

\begin{figure}[htbp]
\centering
\includegraphics[width=\textwidth]{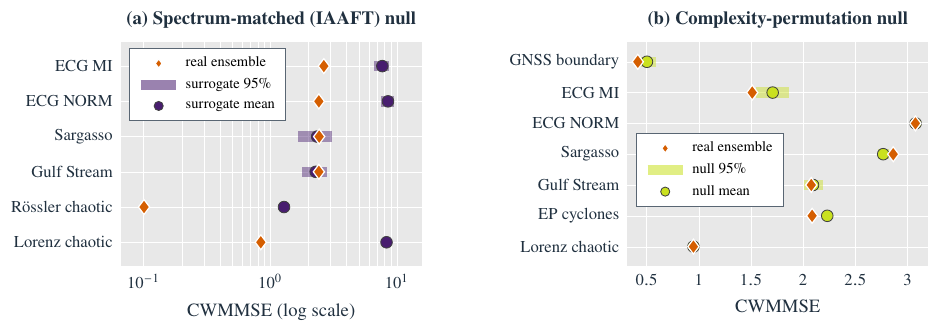}
\caption{Two null tests showing that CWMMSE responds to genuine structure, not to the linear spectrum or to chance. (a) Spectrum-matched (IAAFT) null: each trajectory channel is replaced by a surrogate with the same amplitude distribution and power spectrum but randomised phases, destroying nonlinear and deterministic structure. Orange diamonds mark the real-ensemble CWMMSE and the violet bars give the $2.5$--$97.5$ percentile surrogate band (note the logarithmic horizontal axis). The deterministic and physiological ensembles (chaotic Lorenz and R\"ossler, NORM and MI hearts) sit far below their surrogate bands, since real dynamics are far more predictable per trajectory than phase-randomised noise of the same spectrum; the ocean drifters overlap their band, consistent with their velocity complexity being close to linear and their contrast being carried by diversity rather than nonlinearity. (b) Complexity-permutation null specific to CWMMSE: the clustering is held fixed and the per-trajectory complexities are permuted across members, which preserves the cluster sizes and the mean complexity, so the null mean equals the separable product $H(C)\bar S$ (green bars, $95\%$ interval). A real value (orange diamond) lying outside this band signals a genuine coupling between which trajectories are complex and how the ensemble is partitioned, the term the separable product discards.}\label{fig:esmsurr}
\end{figure}

\cleardoublepage
\section{The eleven systems in detail}\label{s:systems}

Figure~\ref{fig:esmall} collects the CWMMSE of every group with its subsample interval, and
Table~\ref{tab:master} gives the full numbers. Each system is described below with its scientific background, data source,
preprocessing, CWMMSE result and interpretation; the illustrative figures
(Figures~\ref{fig:syschaos}--\ref{fig:sysgnss}) show what each system's raw data looks like. Throughout, we
read each contrast through the decomposition: a difference in CWMMSE may be driven by individual complexity
($\bar S$), by ensemble diversity ($\hat H$), or by both, and naming which is the point of the measure.

\begin{figure}[htbp]
\centering
\includegraphics[width=\textwidth]{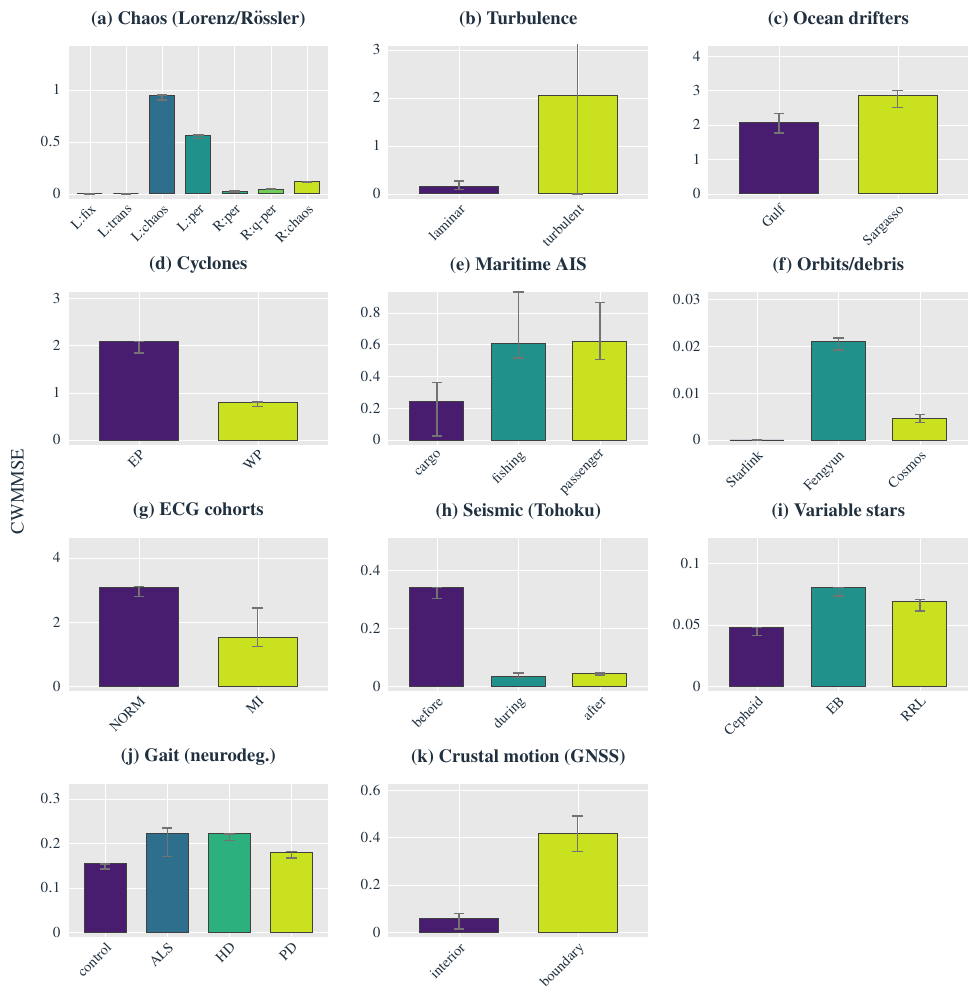}
\caption{CWMMSE for every group across all eleven systems, one panel per system (a--k), using the raw distance convention. Each panel shows the groups of one system (its dynamical regimes, geographic regions, vessel or orbital classes, clinical cohorts, basins or time windows); bar height encodes CWMMSE and the groups within a panel are drawn in distinct colours, while the whiskers are $2.5$--$97.5$ percentile subsample stability ranges, obtained by resampling $0.8M$ trajectories without replacement. The vertical scale differs between panels because CWMMSE is meaningful within a system rather than across systems; the corresponding numbers and intervals are listed in Table~\ref{tab:master}.}\label{fig:esmall}
\end{figure}

\begin{table}[t]
\centering\small
\caption{Full results (raw distance): CWMMSE with subsample interval, normalised cluster entropy
$\hat H$, mean MMSE $\bar S$, cluster count $k$, ensemble size $M$. $^{\dagger}$For the chaos ensembles
CWMMSE scales as $\bar S\ln k$ with $k\approx M$, so a size-$0.8M$ subsample yields a systematically smaller
value and the subsample-stability range lies below the full-sample point estimate rather than bracketing it;
these are stability ranges, not confidence intervals.}\label{tab:master}
\begin{tabular}{llrrrrr}
\toprule
System & Group & CWMMSE [stab.\ range] & $\hat H$ & $\bar S$ & $k$ & $M$\\
\midrule
Chaos & Lorenz chaotic & 0.95 [0.90,0.91]$^{\dagger}$ & 1.00 & 0.21 & 100 & 100\\
Chaos & Lorenz periodic & 0.57 [0.53,0.56]$^{\dagger}$ & 0.87 & 0.14 & 66 & 100\\
Chaos & R\"ossler chaotic & 0.12 [0.11,0.12] & 0.97 & 0.03 & 89 & 100\\
Turbulence & turbulent & 2.05 & 0.43 & 1.03 & n/a & 25000\\
Turbulence & laminar & 0.15 & 0.42 & 0.08 & n/a & 25000\\
Drifters & Sargasso & 2.86 [2.51,2.99] & 0.66 & 0.76 & 47 & 252\\
Drifters & Gulf Stream & 2.08 [1.76,2.33] & 0.46 & 0.87 & 22 & 193\\
Cyclones & East Pacific & 2.09 [1.84,2.03] & 0.53 & 0.60 & 68 & 1193\\
Cyclones & North Atlantic & 1.35 [1.21,1.39] & 0.54 & 0.41 & 51 & 757\\
Cyclones & West Pacific & 0.79 [0.71,0.81] & 0.57 & 0.20 & 101 & 1694\\
AIS & passenger & 0.62 [0.51,0.86] & 0.27 & 0.36 & 8 & 394\\
AIS & fishing/towing & 0.61 [0.52,0.93] & 0.27 & 0.38 & 12 & 436\\
AIS & cargo/tanker & 0.24 [0.03,0.36] & 0.15 & 0.33 & 6 & 163\\
Orbits & Fengyun-1C debris & 0.02 [0.02,0.02] & 0.75 & 0.005 & 60 & 200\\
Orbits & Cosmos-2251 debris & 0.005 [0.00,0.01] & 0.75 & 0.001 & 56 & 200\\
Orbits & Starlink & 0.00 & 0.74 & 0.00 & 54 & 200\\
ECG & healthy (NORM) & 3.08 [2.80,3.11] & 0.97 & 0.665 & 108 & 120\\
ECG & infarction (MI) & 1.51 [1.23,2.43] & 0.50 & 0.716 & 39 & 120\\
Seismic & before (noise) & 0.34 [0.30,0.32] & 1.00 & 0.118 & 18 & 18\\
Seismic & during (waves) & 0.03 [0.03,0.05] & 0.52 & 0.027 & 20 & 60\\
Seismic & after (coda) & 0.05 [0.04,0.05] & 0.71 & 0.015 & 51 & 108\\
Variable stars & Eclipsing binary & 0.08 [0.07,0.08] & 0.70 & 0.020 & 64 & 250\\
Variable stars & RR Lyrae & 0.07 [0.06,0.07] & 0.50 & 0.025 & 19 & 250\\
Variable stars & Cepheid & 0.05 [0.04,0.05] & 0.48 & 0.018 & 20 & 250\\
Gait & ALS & 0.22 [0.17,0.23] & 0.81 & 0.053 & 67 & 156\\
Gait & Huntington & 0.22 [0.21,0.22] & 0.83 & 0.048 & 113 & 240\\
Gait & Parkinson & 0.18 [0.17,0.18] & 0.80 & 0.043 & 76 & 180\\
Gait & control & 0.15 [0.14,0.15] & 0.72 & 0.041 & 52 & 192\\
GNSS & plate boundary & 0.42 [0.34,0.49] & 0.29 & 0.317 & 7 & 240\\
GNSS & plate interior & 0.06 [0.01,0.08] & 0.32 & 0.035 & 8 & 240\\
\bottomrule
\end{tabular}
\end{table}

\FloatBarrier  
\paragraph{Deterministic chaos (Lorenz, R\"ossler).} The Lorenz and R\"ossler systems are the textbook
route from order to chaos, so the complexity ordering across their regimes is known in advance and the
case is a validation rather than a discovery. We integrate ensembles of 100 trajectories per regime from
perturbed initial conditions to a long transient-free record~\cite{lorenz1963,rossler1976}. CWMMSE peaks
at the chaotic regime of each system (Lorenz $0.95$, R\"ossler $0.12$) and vanishes at the fixed point,
but the two peaks arise differently: the Lorenz peak is large in both ingredients ($\hat H\to1$,
$\bar S=0.21$), every trajectory individually irregular and no two alike, whereas the R\"ossler peak is
diversity-driven, its trajectories spreading into many distinct orbits ($\hat H=0.97$) while each stays
individually simple ($\bar S=0.03$). The main text develops this case; the
attractors are shown in Figure~\ref{fig:syschaos}.
\begin{figure}[H]\centering\includegraphics[width=\textwidth]{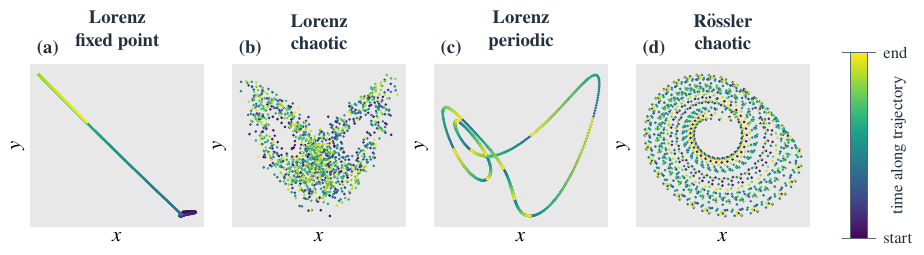}
\caption{Sample phase-space trajectories for the deterministic-chaos validation, one per regime, each a single ensemble member projected onto two state coordinates and coloured by time (dark early to bright late). (a) Lorenz fixed point: the path spirals inward and halts, with essentially no within-trajectory complexity. (b) Lorenz chaotic: the classic two-lobed attractor, an aperiodic orbit that never repeats. (c) Lorenz periodic: a closed limit cycle traced repeatedly. (d) R\"ossler chaotic: a stretched-and-folded spiral band. These single-member paths illustrate the regimes whose $100$-member ensembles are scored in Figure~\ref{fig:esmall}(a).}\label{fig:syschaos}\end{figure}

\paragraph{Fluid turbulence.} Tracking many passive particles through a flow turns a single velocity field
into an ensemble of Lagrangian trajectories, and the transition from laminar to turbulent motion is the
canonical increase in dynamical complexity: laminar streamlines are smooth and predictable, whereas
turbulence fills a broad band of interacting scales with intermittent, irregular fluctuations. We analyse
25000 Lagrangian particle paths from the CMRsim turbulent-flow simulation~\cite{weine2024cmrsim}, split
into 250 groups of 100, in a laminar and a turbulent regime. The two regimes
are almost equally diverse as populations ($\hat H=0.43$ turbulent against $0.42$ laminar), so what
separates them is individual complexity, more than an order of magnitude higher in the turbulent flow
($\bar S=1.03$ against $0.08$). CWMMSE inherits the gap, $2.05$ against $0.15$, and reads it as
complexity-driven: the turbulent ensemble is the more complex system because its members are individually
richer, not because they are more varied. Only the two steady regimes are available, so the case
illustrates the complexity-driven end of the spectrum (Figure~\ref{fig:systurb}).
\begin{figure}[H]\centering\includegraphics[width=\textwidth]{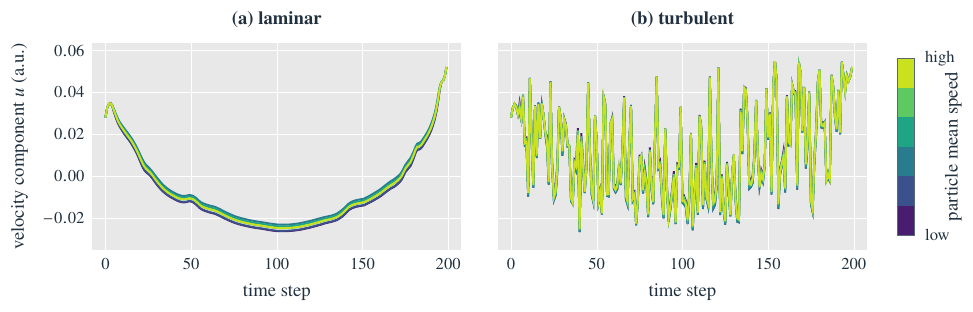}
\caption{Sample particle-velocity time series from the laminar and turbulent flow ensembles. Each panel shows six high-energy particles, plotting one in-plane velocity component against time step on a shared vertical scale. (a) In the laminar flow the velocities are smooth and nearly coincident between particles, giving low within-trajectory complexity. (b) In the turbulent flow each particle's velocity fluctuates rapidly and irregularly, the source of the much higher per-trajectory sample entropy that gives the turbulent ensemble its larger CWMMSE.}\label{fig:systurb}\end{figure}

\paragraph{Ocean drifters.} Surface drifters report their position hourly, so each is a Lagrangian probe
of the near-surface circulation~\cite{elipot2016drifter}; we take two regions in summer 2018 and analyse $120$-hour segments of each drifter's eastward and
northward velocity. This is the clearest diversity-driven reversal in the
corpus: the energetic Gulf Stream is individually the more complex ($\bar S=0.87$ against $0.76$), yet the
quiescent Sargasso is the more complex \emph{system} ($\hat H=0.66$ against $0.46$), because the boundary
current entrains its drifters into a few shared structures while the gyre interior supports many
independent paths. CWMMSE follows the diversity and ranks Sargasso above Gulf Stream ($2.86$ against
$2.08$); the main text develops the contrast (Figure~\ref{fig:sysdrift}).
\begin{figure}[H]\centering\includegraphics[width=82mm]{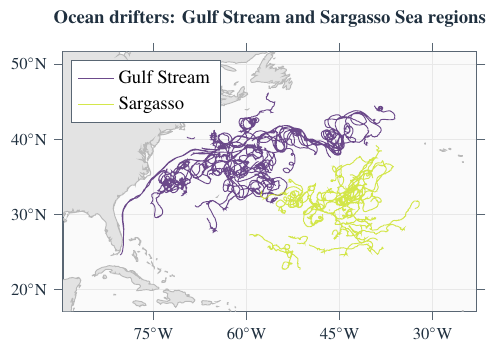}
\caption{Full trajectories of the NOAA Global Drifter Program surface drifters used for the ocean contrast, on a North Atlantic map (land grey, coastlines drawn, longitude and latitude on the axes). Gulf Stream drifters (violet) are entrained by the western-boundary current and swept rapidly northeastward along a few shared paths, whereas Sargasso drifters (yellow-green) meander slowly and individually through the gyre interior. The analysis uses $120$-hour windows of the two velocity components while each drifter is within its region. Because the Gulf Stream paths are shared and coherent, that ensemble is the less diverse one and hence the lower in CWMMSE, despite its individually more tortuous tracks.}\label{fig:sysdrift}\end{figure}

\paragraph{Tropical cyclones.} A best-track is the time series of a storm centre, and ocean basins differ
systematically in how their storms move, from long recurving paths in some basins to more zonal tracks in
others. We take IBTrACS best-track velocities~\cite{knapp2010} for storms since 2000 in three basins, in
$24\times3$-hour windows. The three basins span almost the same ensemble diversity ($\hat H$ between $0.53$ and
$0.57$), so the ordering is set by individual track irregularity: mean complexity falls from the East
Pacific ($\bar S=0.60$) through the North Atlantic ($0.41$) to the West Pacific ($0.20$). CWMMSE follows in
the same order ($2.09$, $1.35$, $0.79$), a complexity-driven contrast like turbulence and crustal motion.
The tracks are smooth and short relative to the other systems, so the case is exploratory rather than
definitive (Figure~\ref{fig:syscyc}).
\begin{figure}[H]\centering\includegraphics[width=\textwidth]{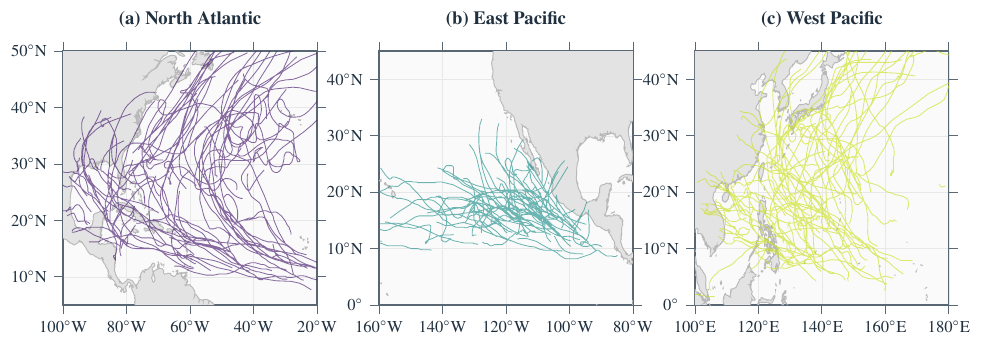}
\caption{Tropical-cyclone best-track trajectories from IBTrACS, one square panel per ocean basin: North Atlantic, East Pacific and West Pacific. Each line is one storm's centre track, with coastlines drawn and longitude and latitude on the axes. The basins differ in how varied and irregular their storm paths are, the property CWMMSE quantifies: the East Pacific tracks are individually the most erratic and take the highest CWMMSE, ahead of the North Atlantic and the West Pacific. Per-basin values and intervals are in Table~\ref{tab:master}.}\label{fig:syscyc}\end{figure}

\paragraph{Maritime vessels.} Ships broadcast their position over AIS, and vessel classes move very
differently: cargo ships and tankers follow regulated, repeatable shipping lanes, whereas fishing, towing
and passenger vessels manoeuvre and loiter. We take one day of AIS traffic in the Los Angeles and Long
Beach approaches~\cite{marinecadastre}, in $20\times1$-minute windows by vessel category. The cargo and
tanker class is the least complex on both ingredients, and most of the gap is diversity: it occupies far
fewer behavioural patterns ($\hat H=0.15$) than the manoeuvring classes ($\hat H\approx0.27$), at similar
individual complexity ($\bar S$ between $0.33$ and $0.38$ across classes). CWMMSE accordingly ranks the
manoeuvring vessels above the lane-followers ($0.62$ for passenger and $0.61$ for fishing or towing against
$0.24$ for cargo and tankers); the corresponding main-text figure groups the two manoeuvring classes
against the lane-following one (Figure~\ref{fig:sysais}).
\begin{figure}[H]\centering\includegraphics[width=82mm]{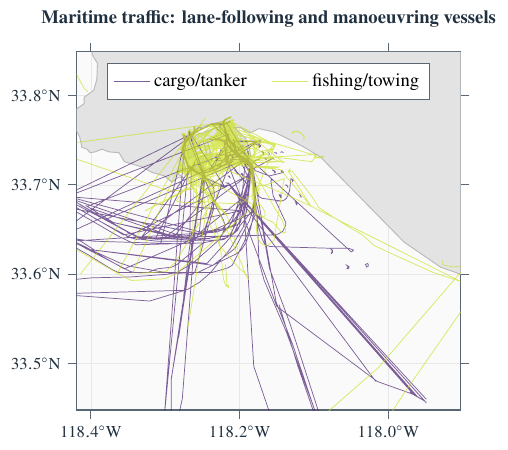}
\caption{Full vessel tracks near the ports of Los Angeles and Long Beach, coloured by vessel class, from Automatic Identification System (AIS) data (Marine Cadastre): cargo and tanker traffic (violet), which follows straight, repeatable shipping lanes, against fishing and towing vessels (yellow-green), which manoeuvre and loiter in irregular patterns (land in grey, coastlines drawn, longitude and latitude on the axes). The manoeuvring classes occupy a more varied set of behavioural patterns than the lane-followers, at comparable individual complexity, which CWMMSE registers as the higher score (Figure~\ref{fig:esmall}(e)).}\label{fig:sysais}\end{figure}

\paragraph{Orbits and debris.} A satellite catalogue propagated forward in time is an ensemble of orbital
trajectories~\cite{celestrak,vallado2006}; we contrast an intact Starlink constellation with two breakup
debris clouds, in $90\times60$-second windows. These ensembles carry the highest diversity of any system
($\hat H\approx0.75$, the objects spread across many distinct orbital planes) yet the lowest individual
complexity ($\bar S\lesssim0.005$), so CWMMSE is near zero throughout. This is the null control of the main
text: the one case a diversity-only measure would wrongly rank as the most complex system of all
(Figure~\ref{fig:sysorb}).
\begin{figure}[H]\centering\includegraphics[width=\textwidth]{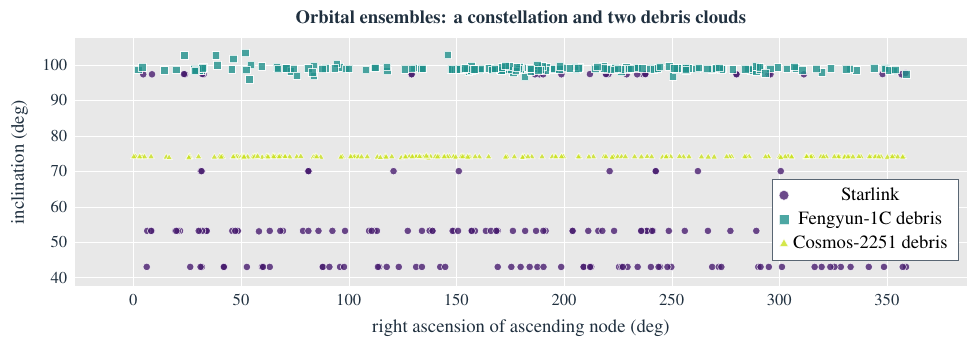}
\caption{Orbital geometry of the tracked objects: the orbital null. Each point is one object from CelesTrak two-line elements propagated with SGP4, with the right ascension of the ascending node (the orbit-plane orientation) on the horizontal axis and the inclination on the vertical axis. The Starlink constellation (violet) occupies a single tightly controlled inclination band near $53^{\circ}$, whereas the Fengyun-1C (teal, near $99^{\circ}$) and Cosmos-2251 (yellow, near $74^{\circ}$) debris clouds spread across all ascending-node orientations. The debris is geometrically diverse (high ensemble diversity $\hat H$) yet each fragment follows a simple near-Keplerian orbit (low individual complexity $\bar S$), so its CWMMSE is near zero. This is the null control of the main article.}\label{fig:sysorb}\end{figure}

\paragraph{Electrocardiogram cohorts.} Each 10-second 12-lead record (leads I, II and V2) from the PTB-XL
database~\cite{wagner2020ptbxl,goldberger2000} is one multivariate trajectory, and the ensemble is a
clinical cohort. The healthy (NORM) and myocardial-infarction (MI) cohorts have nearly equal individual
complexity, with the diseased cohort marginally higher ($\bar S=0.716$ against $0.665$), so the mean MMSE
ranks the cohorts in the wrong direction; the separation lives entirely in diversity, which contracts with disease
($\hat H=0.97$ healthy against $0.50$ infarction). CWMMSE reads this correctly, separating the cohorts by
roughly a factor of two ($3.08$ against $1.51$), the principal result the main text develops
(Figure~\ref{fig:sysecg}).
\begin{figure}[H]\centering\includegraphics[width=\textwidth]{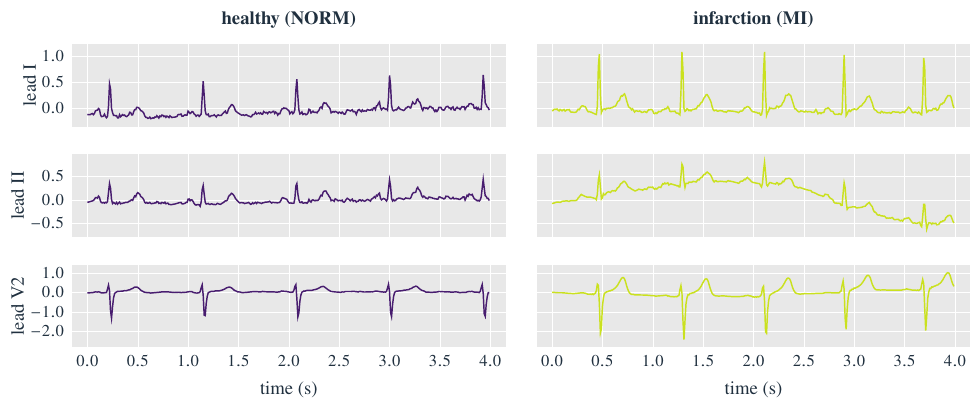}
\caption{Sample $4$-second electrocardiograms from the PTB-XL database. Each panel shows leads I, II and V2 (rows) for a healthy record (NORM, violet, left column) and a myocardial-infarction record (MI, yellow-green, right column). Both records show the regular QRS spikes of the heartbeat; the NORM and MI cohorts differ in subtler waveform morphology and baseline behaviour rather than in obvious irregularity, which is why their mean within-trajectory complexity is nearly equal and they are separated instead by ensemble diversity. The cohort CWMMSE values are in Figure~\ref{fig:esmall}(g).}\label{fig:sysecg}\end{figure}

\paragraph{Seismic record.} A great earthquake briefly imposes coherent order on the seismic wavefield:
before the event a station records ambient noise from many uncorrelated sources, whereas the main shock
arrives as a single coherent wave train shared by every window of the record. We window a three-component
IU.ANMO recording of the 2011 Tohoku earthquake into $60$-second segments~\cite{beyreuther2010} and form
ensembles from the pre-event, main-shock and coda phases. The pre-event noise is the more complex system
on both ingredients, being both more diverse ($\hat H=1.00$ against $0.52$ during the wave train) and
individually more irregular ($\bar S=0.12$ against $0.03$). CWMMSE therefore falls sharply across the
event, from $0.34$ before to $0.03$ during, registering the earthquake as an ordering event that
simplifies the ensemble rather than as a burst of complexity (Figure~\ref{fig:sysseis}).
\begin{figure}[H]\centering\includegraphics[width=\textwidth]{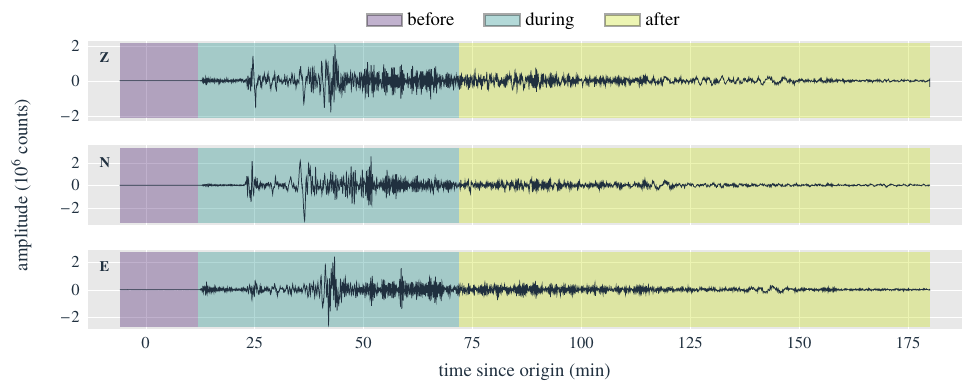}
\caption{Three-component broadband seismogram of the 2011 Tohoku earthquake, recorded at station IU.ANMO (vertical Z, north N and east E components), shown as amplitude against time since origin. Three analysis windows are shaded: \emph{before} (the quiet pre-event noise), \emph{during} (the high-amplitude body- and surface-wave arrivals) and \emph{after} (the decaying coda). Ensembles are formed from many such windowed segments. The pre-event noise is the most diverse ensemble of segments, whereas the strong main-shock wave train is dominated by a few coherent arrivals, so CWMMSE is highest \emph{before} the event (Figure~\ref{fig:esmall}(h)).}\label{fig:sysseis}\end{figure}

\paragraph{Variable-star light curves.} A variable star is catalogued by its light curve, the periodic
rise and fall of its brightness, and distinct variability classes leave distinct signatures: the geometric eclipses of eclipsing
binaries and the radial pulsations of RR~Lyrae and Cepheids form three classical, expert-separable classes. We take the UCR
StarLightCurves set~\cite{dau2019ucr}, phase-folded normalised brightness series of length $1024$ with
$250$ examples per class. Although each class is individually a low-complexity periodic signal, the ranking by individual complexity and the
ranking by system complexity diverge: the RR~Lyrae are
individually the most variable ($\bar S=0.025$ against $0.020$ for eclipsing binaries), yet the
eclipsing-binary population is the more diverse ($\hat H=0.70$ against $0.50$) and so the more complex
\emph{system}. CWMMSE follows the diversity, ranking eclipsing binaries above RR~Lyrae and the smoother
Cepheids ($0.08$, $0.07$, $0.05$), a diversity-driven reversal that mirrors the ocean drifters
(Figure~\ref{fig:sysstar}).
\begin{figure}[H]\centering\includegraphics[width=\textwidth]{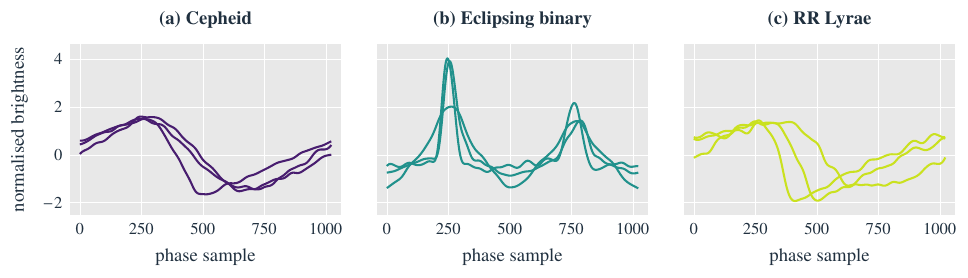}
\caption{Sample variable-star light curves, three per expert class (UCR StarLightCurves): (a) Cepheid, (b) eclipsing binary, (c) RR~Lyrae, each a normalised brightness series against phase sample (shared vertical scale). The light curves are individually smooth and near-periodic, so within-trajectory complexity is low and similar across classes; the classes differ mainly in how diverse the population of shapes is, which is the diversity ingredient that CWMMSE reads as ensemble complexity (Table~\ref{tab:master}).}\label{fig:sysstar}\end{figure}

\paragraph{Neurodegenerative gait.} Walking is a rhythmic motor behaviour, and neurodegenerative disease
disrupts its regularity, so the stride-to-stride dynamics of the ground-reaction force carry a clinical
signal. We take the PhysioNet gaitndd database~\cite{hausdorff2000,goldberger2000}, the vertical
ground-reaction force under each foot at $300$~Hz, normalised per record and cut into $5$-second windows,
with cohorts for Parkinson's disease, Huntington's disease, amyotrophic lateral sclerosis and healthy
controls. CWMMSE rises monotonically from controls ($0.15$) through Parkinson's ($0.18$) to Huntington's
and amyotrophic lateral sclerosis (both $0.22$), and here both ingredients move together: the disease
cohorts are individually more irregular ($\bar S$ rising from $0.041$ to $0.053$) and more diverse
($\hat H$ from $0.72$ to above $0.80$). The case is a clinical sibling of the electrocardiogram cohorts
with the sign reversed: where infarction contracts a cohort's complexity, neurodegeneration enlarges it,
a direction the decomposition reports rather than assumes
(Figure~\ref{fig:sysgait}).
\begin{figure}[H]\centering\includegraphics[width=\textwidth]{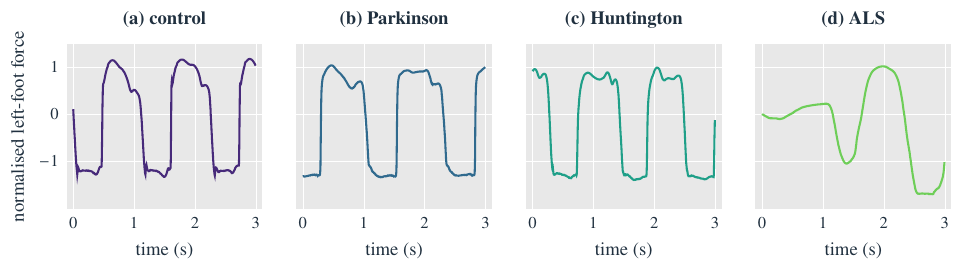}
\caption{Sample left-foot ground-reaction-force traces (normalised), one recording per group (PhysioNet gaitndd), showing a three-second excerpt of the five-second analysis windows. (a) control, (b) Parkinson's, (c) Huntington's, (d) amyotrophic lateral sclerosis. The neurodegenerative cohorts (b) to (d) show more irregular stride-to-stride force than the control (a), the source of their higher per-trajectory complexity and larger CWMMSE (Table~\ref{tab:master}).}\label{fig:sysgait}\end{figure}

\paragraph{Crustal motion (GNSS).} A continuously operating GNSS receiver fixes its own position daily, so
a multi-decade station record traces how the solid Earth moves beneath it. In a stable plate interior that
motion is little more than seasonal loading and monument noise, whereas near an active plate boundary it
carries secular strain, transient slip and the response to nearby earthquakes. We take daily East, North
and Up series from the Nevada Geodetic Laboratory~\cite{blewitt2018ngl} in the North-America plate-fixed
frame, contrasting interior stations with stations along the active western boundary, in $500$-day windows.
The two populations are almost equally diverse ($\hat H=0.29$ at the boundary against $0.32$ in the
interior), so what separates them is individual complexity, an order of magnitude higher at the boundary
($\bar S=0.32$ against $0.035$). CWMMSE separates boundary from interior by about a factor of seven ($0.42$
against $0.06$) and reads the gap as complexity-driven rather than diversity-driven, the mirror image of
the drifter and variable-star contrasts (Figure~\ref{fig:sysgnss}).
\begin{figure}[H]\centering\includegraphics[width=\textwidth]{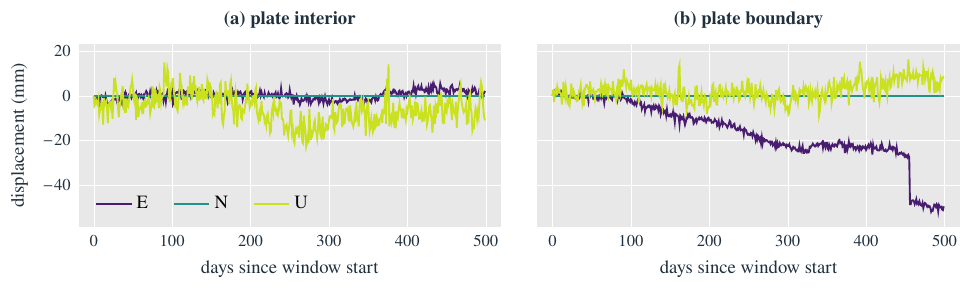}
\caption{Sample GNSS (Global Navigation Satellite System) daily-position series for two stations. Each panel shows East, North and Up displacement from the window start (in mm) for (a) a plate-interior station and (b) a plate-boundary station, in the North-America plate-fixed frame (Nevada Geodetic Laboratory). The interior station (a) varies little and smoothly, whereas the boundary station (b) carries structured tectonic motion; this is the source of its far higher per-trajectory complexity and CWMMSE (Table~\ref{tab:master}).}\label{fig:sysgnss}\end{figure}

\FloatBarrier  
\section{Data and code availability}\label{s:data}

All code is an openly available Python package, with scripts that regenerate the figures and reported numbers
(\url{https://github.com/cisgroup/cwmmse}; DOI and Zenodo will be added on acceptance). Table~\ref{tab:datasets} lists the
datasets; all are open.

\begin{table}[t]
\centering\small
\caption{Datasets, sources and access.}\label{tab:datasets}
\begin{tabular}{lll}
\toprule
System & Source & Access\\
\midrule
Drifters & NOAA Global Drifter Program~\cite{elipot2016drifter} & ERDDAP, open\\
Cyclones & NOAA IBTrACS v04~\cite{knapp2010} & download, open\\
AIS & Marine Cadastre~\cite{marinecadastre} & download, open\\
Orbits & CelesTrak TLE~\cite{celestrak} & download, open\\
ECG & PhysioNet PTB-XL~\cite{wagner2020ptbxl,goldberger2000} & PhysioNet, open\\
Seismic & IRIS/EarthScope FDSN~\cite{beyreuther2010} & FDSN web service, open\\
Variable stars & UCR StarLightCurves~\cite{dau2019ucr} & download, open\\
Gait & PhysioNet gaitndd~\cite{hausdorff2000,goldberger2000} & PhysioNet, open\\
GNSS & Nevada Geodetic Lab~\cite{blewitt2018ngl} & download, open\\
\bottomrule
\end{tabular}
\end{table}

\printbibliography[title={Supplementary References}]
\end{refsection}

\end{document}